\documentclass[journal,twoside,web]{ieeecolor}
\usepackage{generic}
\usepackage{cite}
\usepackage{amsmath,amssymb,amsfonts}
\usepackage{algorithmic}
\usepackage{subfig}
\usepackage{graphicx}
\usepackage{bm}
\usepackage{adjustbox}
\usepackage{pifont}
\usepackage{tikz}
\usepackage{transparent}
\usepackage{algorithm,algorithmic}
\usepackage{hyperref}
\usepackage{enumerate}
\usepackage{soul}
\usepackage{color}
\usepackage{mathtools}
\newtheorem{definition}{Definition}
\newtheorem{theorem}{Theorem}
\newtheorem{lemma}{Lemma}
\newtheorem{example}{Example}
\newtheorem{problem}{Problem}
\usepackage{tcolorbox}
\newtheorem{remark}{Remark}
\newtheorem{assumption}{Assumption}
\newtheorem{corollary}{Corollary}
\usepackage{csquotes}
\usepackage{mathtools}
\usepackage{titlesec}

\usepackage{pifont}
\newcommand{\cmark}{\ding{51}}%
\newcommand{\xmark}{\ding{55}}%
\hypersetup{hidelinks=true}
\usepackage{textcomp}
\usepackage{version}
\includeversion{arxiv}
\excludeversion{tac}

\def\BibTeX{{\rm B\kern-.05em{\sc i\kern-.025em b}\kern-.08em
    T\kern-.1667em\lower.7ex\hbox{E}\kern-.125emX}}
\markboth{\hskip25pc IEEE TRANSACTIONS AND JOURNALS TEMPLATE}
{Author \MakeLowercase{\textit{et al.}}: Title}

\usepackage{comment}


\begin{document}
\title{Network-aware Recommender System\\ via Online Feedback Optimization}
\author{Sanjay Chandrasekaran, Giulia De Pasquale, Giuseppe Belgioioso, and Florian D\"orfler
\thanks{S. Chandrasekaran, G. De Pasquale, F. D\"orfler are with the Automatic Control Laboratory, Department of Electrical Engineering and Information Technology, ETH Z\"urich, Physikstrasse 3 8092 Z\"urich, Switzerland  (e-mails: \texttt{\{schandraseka,degiulia,doerfler\}@ethz.ch}). G. Belgioioso is with the Division of Decision and Control Systems (DCS) at KTH Royal Institute of Technology, Malvinas v\"ag 10, SE-100 44 Stockholm, Sweden (e-mail: \texttt{giubel@kth.se}). This work was partially supported by the SNSF via NCCR Automation (Grant Number 180545) and by the Wallenberg AI, Autonomous Systems and Software Program (WASP) funded by the Knut and Alice Wallenberg Foundation.}}

\maketitle
\begin{abstract}
Personalized content on social platforms can exacerbate negative phenomena such as polarization, partly due to the feedback interactions between recommendations and the users.
In this paper, we present a control-theoretic recommender system that explicitly accounts for this feedback loop to mitigate polarization.
Our approach extends \textit{online feedback optimization} -- a control paradigm for steady-state optimization of dynamical systems -- to develop a recommender system that trades off users engagement and polarization reduction, while relying solely on online click data.
We establish theoretical guarantees for optimality and stability of the proposed design and validate its effectiveness via numerical experiments with a user population governed by Friedkin–Johnsen dynamics. Our results show these ``network-aware" recommendations can significantly reduce polarization while maintaining high levels of user engagement.

\end{abstract}

\begin{IEEEkeywords}
Recommender systems, networked control systems, opinion dynamics, non-convex optimization.
\end{IEEEkeywords}

\section{Introduction}

Online social platforms use recommender systems to provide users with tailored content to maximize engagement over the platform. To do so, recommender systems exploit techniques, such as content-based filtering \cite{bansal} and collaborative filtering \cite{erinaki}, that combine information personalization, popularity and similarity of interests with other users.
A significant drawback of content personalization is related to the \emph{amplification} phenomenon \cite{KD-BS-KS-WU:21}, associated with the feedback loop that arises between recommended content and user preferences. Amplification leads to misinformation spread \cite{RUFFO2023100531}, filter bubbles \cite{Geschke}, bias exacerbation \cite{pagan2023classification} and, ultimately, polarization \cite{David,Bakshy}. 
Motivated by these challenges, we consider instead recommender systems inspired by control-theoretic principles that have the potential of explicitly accounting for the feedback interplay between recommendations and user interests to mitigate polarization. 
Our goals are to develop a systems-theory-based recommender design to understand i)~the impact of recommendations on user opinions, and ii)~how recommender systems should depart from engagement maximization to mitigate polarization. We approach the problem by building upon the systems-theory perspective of recommender systems pioneered in \cite{rossi,JC-SK-SKE-UW-SD-SI:24} and by leveraging on \textit{online feedback optimization} (OFO) \cite{hauswirth2021optimization} -- an emerging control paradigm for steady-state optimization of dynamical systems. Specifically, we aim at designing a recommender system as a dynamic feedback controller that mitigates polarization by providing users personalized content, using solely \emph{implicit feedback} from their online click data, as illustrated in Figure \ref{fig:blockdiagram_1}.

\begin{figure}
\centering
\includegraphics[width=\columnwidth]{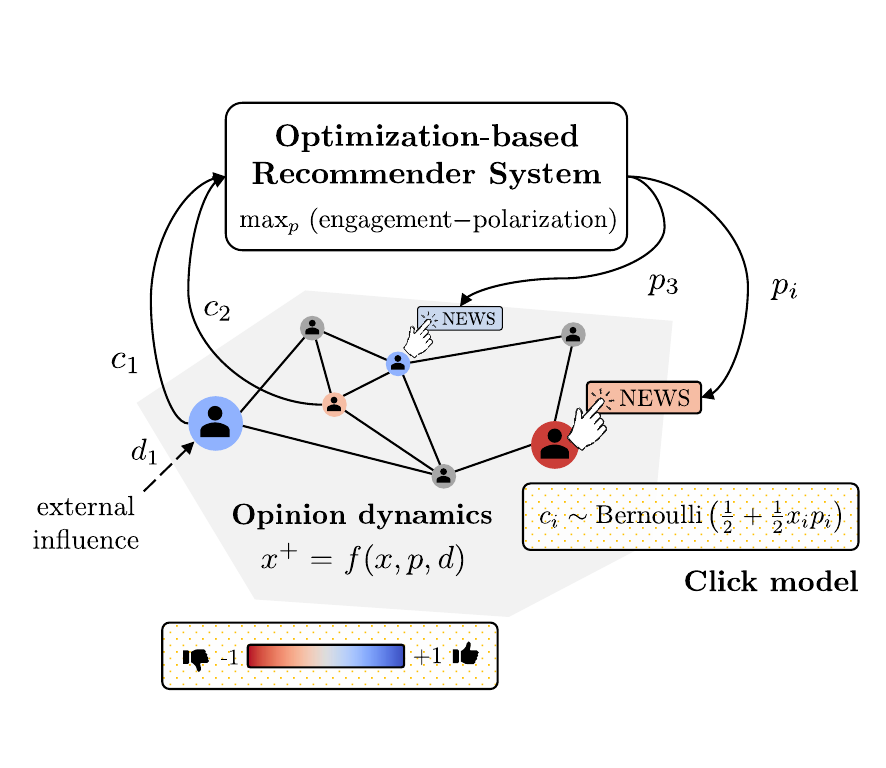}
    \caption{Closed-loop interconnection between users' opinion in a social network and the proposed recommender system.}
    \label{fig:blockdiagram_1}
\end{figure}

\subsection*{Related Work}
\textbf{Recommender systems.} Most of the literature on recommender systems  relies on static methods (such as matrix factorization) and neglects the dynamic nature of the feedback interactions between the users and the online platforms \cite{Chen}. In contrast, our work develops along a more recent systems-theory perspective on recommender systems \cite{rossi, JC-SK-SKE-UW-SD-SI:24}, where the feedback interaction between social platform and recommender systems is made explicit.
In \cite{rossi}, the recommender system provides only extreme positions to the users based on their clicking history.  The authors show that when \emph{confirmation bias} \cite{nick} -- the tendency to engage with content that aligns with prior beliefs -- is the sole factor driving clicks, their recommender system promotes opinion polarization. The recommender design proposed in \cite{Lanzetti} alternates between exploitation and exploration by either providing news which more closely align with the users' beliefs or exploring new content. The authors show that shifts in the opinions at the individual level do not necessarily reflect in the opinion distribution of the population.  In \cite{sprenger2024control}, the recommender system is modeled as an artificial user within the social platform, providing news content at minimum distance from all users opinions to promote engagement. This study investigates how such a recommender system can affect opinion dynamics across the network. 
 In \cite{JC-SK-SKE-UW-SD-SI:24}, the focus is on the interplay between a single user, the recommender system, and harmful content. The authors seek recommendation policies that balance maximizing the click-through rate (CTR) and mitigate harm. Their findings suggest that to properly mitigate harm, it is necessary to account for user dynamics. 
Notably, most of these studies focus solely on individual-user designs, where recommendations are tailored to each user in isolation, without explicitly accounting for the broader influence of social interactions. However, social interactions are well-known to be a crucial factor in shaping opinions over a connected population \cite{FJ,DG,polaropiniondynamics}.

\textbf{Online feedback optimization.} Most OFO controllers are based on first-order optimization algorithms \cite{hauswirth2021optimization}, and their deployment require real-time access to the states of the plant, an explicit knowledge of its input-output sensitivity (or a closed approximation), and a closed-form expression of the gradient of the control objective.
However, some recent extensions \cite{picallo,10278489, Cothren,cothren2,zhiyu} have relaxed some of these stringent requirements. Specifically, \cite{picallo} proposes to estimate the input-output sensitivity of the plant from real-time state measurements using recursive-least square; while \cite{10278489} achieves similar results leveraging historical input-output trajectories and behavioral systems theory. In \cite{Cothren}, the authors address the case where exact states measurements cannot be directly accessed, but need to be inferred from perceptual information. Their solution relies on the integration of a neural network component that transform high-dimensional sensory data into state estimates. Similarly, \cite{cothren2} proposes a neural-network solution for the case where a closed-form expression of the gradient is not available. Finally, \cite{zhiyu} utilizes a zeroth-order method to estimate the composite gradient directly from online cost function evaluations, eliminating the need for both input-output sensitivity and a closed-form gradient expression.
See Table \ref{literaturecomparisontable} for a comparative overview of these OFO controllers. 


\begin{table}[t]
    \begin{center}
    \caption{Comparative overview of the main requirements of some recent online feedback optimization (OFO) controllers.}\label{literaturecomparisontable}
    \begin{tabular}{c c c c c c }
    
        Controller & Sensitivity & States & Objective & Gradient & Convex\\
        \hline \hline
        \cite{hauswirth2021optimization} & \cmark & \cmark & \cmark & \cmark & \cmark\\ 
        \hline
        \cite{picallo, 10278489} & \xmark & \cmark & \cmark & \cmark & \cmark\\
        \hline
        \cite{Cothren} & \cmark & \xmark & \cmark & \cmark & \cmark\\
        \hline
        \cite[Alg. 2]{cothren2} & \cmark & \cmark & \xmark & \xmark & \cmark\\
        \hline
        \cite{zhiyu} & \xmark & \cmark & \cmark & \xmark & \xmark\\
        \hline
        This work & \xmark & \xmark & \xmark & \xmark & \xmark\\
        \hline
    \end{tabular}
    \end{center}
\end{table}

\subsection*{Contributions}

Unfortunately, existing OFO designs cannot be directly applied to social networks, as recommender systems typically only have access to implicit feedback in the form of click data. Real-time users opinion are inaccessible, social interactions may be hidden due to privacy preferences, and there is no closed-form expression for the expected CTR as it depends on unknown users clicking behavior -- namely, their tendencies or pattern when interacting with recommended items.

In this paper, we address these challenges by integrating auxiliary estimation blocks into a projected-gradient OFO controller to compensate for the missing information, thereby enabling online feedback optimization in social networks.  Specifically, we estimate opinions and clicking behaviour using supervised learning, sensitivity using Kalman filtering and gradients using zero-order methods. 

Unlike most system-theory-based recommender system designs in the literature, our proposed algorithm is \textit{network-aware}, namely, it leverages on users' social interactions to better mitigate undesired population phenomena such as polarization, while maintaining high levels of engagement.

Our contribution is threefold:
\begin{enumerate}[(i)]
    \item We propose a novel OFO-based recommender system algorithm that steers a social network towards steady-state opinion profiles that balance between maximizing overall click-through rate and minimizing polarization. Our design is agnostic about real-time opinions of the users, their network interactions and clicking behavior, and solely relies on click data measured online.

    \item We establish closed-loop stability guarantees of the recommender system-social network interconnection with respect to locally-optimal solutions of the steady-state control objective, assuming exponential stability of the opinion dynamics and other mild technical conditions.
    
    \item We validate the proposed recommender system through numerical simulations on a social network where user opinions are governed by an extended version of the Friedkin–Johnsen (FJ) dynamics \cite{rossi}, which incorporates the influence of recommended content. Our results demonstrate that network-awareness plays a crucial role in reducing opinion polarization in social platforms.

\end{enumerate}

\subsection*{Outline}
In Section \ref{sec:problemformulation}, we formulate the recommender objectives as a steady-state control problem. In Section \ref{sec:recommenderdesign}, we describe the recommender design using OFO. Section \ref{sec:results} present numerical simulations to evaluate the performance of our algorithm and highlight the benefits of network-awareness. Finally, Section \ref{sec:conclusions} summarizes our findings and concludes the paper.
\subsection*{Notations and Preliminaries}
We let the symbols  $\mathbb{R}$($\mathbb{R}_+$), $\mathbb{N}_0$ denote the set of (positive) real numbers and non-negative integers, respectively. The set of integers $\{1,2,\ldots,n\}$ is denoted by $[n]$. For a vector $y\in\mathbb{R}^n$, we let the symbol $|y|$ denote the vector whose $i$-th entry, $|y|_i$, is the $i$-th component of $y$ in modulus, $|y_i|$. The symbols $\lVert \cdot \rVert$, $\lVert \cdot \rVert_\infty$ denote the Euclidean and infinity norm, respectively. The symbol  $\langle \cdot, \cdot \rangle : \mathbb{R}^{n}\times  \mathbb{R}^{n}\rightarrow\mathbb{R}$ denotes the standard inner product on $\mathbb{R}^{n}$. The value of a signal $x$ at time instant $k$ is denoted by $x^k$. The symbol $1_n,(0_n)$, denotes the all ones (zeros) vector of size $n$. The symbols $I_n$ and $O_n$, denote the $n$-dimensional identity and zero matrix, respectively. Given a matrix $A\in\mathbb{R}^{m\times n}$, the matrix is \emph{positive} (\emph{non-negative}) if all its elements are greater (or equal) than zero, and we denote it as $A> 0$ ($A\geq 0$). We let $\mathrm{vec}(A)$ denote the vectorized version of $A$, i.e., if $A = [a_1 \ a_2 \  \ldots \ a_n]$, with $a_j\in\mathbb{R}^m, j\in[n]$, then $\mathrm{vec}(A) = [a_1^\top a_2^\top \ldots a_n^\top]^\top$. For a matrix $A\in\mathbb{R}^{n\times n}$, we denote by $\mathrm{det}[A]$ its determinant; $A$ is row-stochastic if $A\geq 0$ and $A1_n=1_n$.  The symbol $\mathcal{D}_\mathcal{U}\subseteq \mathbb{R}$ denotes the set of $n$-dimensional diagonal matrices with diagonal entries in $\mathcal U$. The symbol $\mathrm{diag}[x]$ denotes the diagonal matrix with $x_i$ on its $i$-th diagonal entry. Given $\mathcal{C} \subset \mathbb{R}^n$ and $x\in \mathbb{R}^n$, we let $\Pi_\mathcal{C}[x]$ denote the Euclidean projection of $x$ over $\mathcal{C}$. The normal distribution is denoted by $\mathcal{N}(\mu,\Sigma)$, with $\mu, \Sigma$ representing the mean and variance, and the uniform distribution over the interval $[a,b]$, with  $a,b, \in \mathbb{R}$ is represented by $\mathcal{U}[a,b]$.

We now state some preliminary lemmas and definitions that are recurrently used in this manuscript. 
\begin{definition}[Minimal modulus of continuity\cite{BRENEIS}]\label{modcontinuity}
    Let $h:\mathcal{X}\to\mathbb{R}$ be a continuous function over a  bounded set $\mathcal{X}$. The \emph{minimal modulus of continuity} of $h$ on $\mathcal{X}$ is defined as%
    $$\omega_h(\gamma) := \mathrm{sup}\Big\{|h(x)-h(y)|:x,y\in\mathcal{X}, \lVert x-y \rVert_\infty \leq \gamma\Big\}.$$
\end{definition}

\smallskip
\begin{definition}[Persistently exciting input \cite{WILLEMS}] \label{POEhenkel}
The signal $u\in\mathbb{R}^n$ 
is said to be persistently exciting of order $n$ if there exists a time span $S>0$ such that for all $k>0$, the matrix formed by the columns $u^{k+i}\in\mathbb{R}^n$ for $i\in\{0,1,\ldots S\}$ has full rank, i.e. $\mathrm{Rank}[u^k \ u^{k+1} \ \ldots \ u^{k+S}] = n$.
\end{definition}
\begin{definition}[Global $\beta$-smoothness  \cite{nesterov}]
     A continuously differentiable map $\Phi: \mathbb{R}^n \rightarrow \mathbb{R}$ is globally $\beta$-smooth if its gradient $\nabla \Phi$ is globally $\beta$-Lipschitz.
\end{definition}
\begin{lemma}[Properties of $\beta$-smooth functions \cite{nesterov}]\label{Lsmoothlemma}
    Given a globally $\beta$-smooth map $\Phi: \mathbb{R}^n \rightarrow \mathbb{R}$, then $\Phi(x_1) - \Phi(x_2) - \nabla \Phi^\top (x_2) (x_1-x_2) \leq \frac{1}{2}\beta\lVert x_1-x_2\rVert^2$. Further, if $\Phi$ is twice continuously-differentiable, then $\lVert\nabla^2 \Phi\rVert \leq \beta$.
\end{lemma}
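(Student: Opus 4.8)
The plan is to establish the two claims separately, both directly from the defining property of global $\beta$-smoothness, namely that $\nabla\Phi$ is globally $\beta$-Lipschitz: $\lVert \nabla\Phi(x_1) - \nabla\Phi(x_2)\rVert \leq \beta\,\lVert x_1 - x_2\rVert$ for all $x_1,x_2\in\mathbb{R}^n$. For the quadratic upper bound, the key idea is to represent the left-hand side as a line integral of the gradient along the segment joining $x_2$ to $x_1$. Introducing the parametrization $x(t) = x_2 + t(x_1-x_2)$ for $t\in[0,1]$ and applying the fundamental theorem of calculus to the scalar map $t\mapsto \Phi(x(t))$, I would write
\begin{equation*}
\Phi(x_1) - \Phi(x_2) = \int_0^1 \nabla\Phi(x(t))^\top (x_1-x_2)\, dt .
\end{equation*}
Since $\nabla\Phi(x_2)^\top(x_1-x_2) = \int_0^1 \nabla\Phi(x_2)^\top(x_1-x_2)\, dt$, subtracting it collapses the target quantity to $\int_0^1 \big[\nabla\Phi(x(t)) - \nabla\Phi(x_2)\big]^\top (x_1-x_2)\, dt$.

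The second step is to bound this integral. Applying the Cauchy--Schwarz inequality pointwise in $t$ and then the $\beta$-Lipschitz property, and noting that $x(t)-x_2 = t(x_1-x_2)$ so that $\lVert \nabla\Phi(x(t)) - \nabla\Phi(x_2)\rVert \leq \beta t\,\lVert x_1-x_2\rVert$, the integrand is dominated by $\beta t\,\lVert x_1-x_2\rVert^2$. Integrating via $\int_0^1 \beta t\, dt = \tfrac{1}{2}\beta$ then yields exactly $\Phi(x_1) - \Phi(x_2) - \nabla\Phi(x_2)^\top(x_1-x_2) \leq \tfrac{1}{2}\beta\,\lVert x_1-x_2\rVert^2$, which is the first assertion.

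For the Hessian bound under the additional assumption that $\Phi$ is twice continuously differentiable, the plan is to use the limit characterization of $\nabla^2\Phi$ as the Jacobian of $\nabla\Phi$. Fixing a point $x$ and a unit direction $d$, I would write $\nabla^2\Phi(x)\,d = \lim_{t\to 0} t^{-1}\big(\nabla\Phi(x+td) - \nabla\Phi(x)\big)$; each difference quotient has norm at most $t^{-1}\cdot \beta\,\lVert td\rVert = \beta$ by Lipschitzness, so passing to the limit (justified by continuity of $\nabla^2\Phi$) gives $\lVert \nabla^2\Phi(x)\,d\rVert \leq \beta$. Taking the supremum over all unit vectors $d$ shows $\lVert \nabla^2\Phi(x)\rVert \leq \beta$, which is the second assertion (and is consistent with the symmetric Hessian having its largest absolute eigenvalue bounded by $\beta$).

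Both parts are essentially routine once the right representation is chosen; the only points requiring care are the regularity justifications, namely using $C^1$-differentiability to validate the integral representation in the first part and using continuity of the Hessian to interchange the limit and the norm bound in the second part. I expect the mild bookkeeping in the integral estimate to be the most delicate step, though it presents no real conceptual obstacle.
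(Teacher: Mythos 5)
Your proposal is correct and takes the same route as the paper's source: the paper states this lemma without proof, citing \cite{nesterov}, and your argument---the integral representation of $\Phi(x_1)-\Phi(x_2)$ along the segment, followed by Cauchy--Schwarz and the $\beta$-Lipschitz gradient bound to get the $\tfrac{1}{2}\beta\lVert x_1-x_2\rVert^2$ estimate, plus the difference-quotient limit $\nabla^2\Phi(x)d=\lim_{t\to0}t^{-1}(\nabla\Phi(x+td)-\nabla\Phi(x))$ for the Hessian bound---is precisely the standard proof found there. One minor remark: in the second part, continuity of $\nabla^2\Phi$ is not actually needed to pass to the limit, since twice differentiability guarantees the limit exists and continuity of the norm then gives $\lVert\nabla^2\Phi(x)d\rVert\leq\beta$ directly.
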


\section{Problem setup}
\label{sec:problemformulation}
\subsection{Social Network Modelling}
We consider a recommender system for a social network consisting of $n$ users, indexed by $i \in [n]$ that makes sequential, personalized recommendations. User opinions are collected into a vector $x\in [-1,1]^n$, with $x_i$, $i\in [n]$, denoting the opinion of the $i$-th user. The users' opinions evolve as
\begin{equation}\label{opmodel}
x^{k+1}  = f(x^k,p^k,d),
\end{equation}
where  $p\in [-1,\ 1]^n$ refers to the positions that the recommended article takes on, whose $i$-th entry is the position provided to the $i$-th user, $d\in [-1,\ 1]^n$ represents an external influence to the platform, e.g., television, newspapers or a personal bias, and $f:([-1,\ 1]^n)^3\to [-1,\ 1]^n$ encodes the influence of interactions among users. We assume there is a single topic of discussion or issue, on the platform. For the sake of illustration, if the issue is polarizing, then, a recommendation with position $p=+1$ ($-1$) can be interpreted as news strongly in support of (against) the issue.

The following assumption ensures that the dynamics \eqref{opmodel} are well-posed and admit a steady-state mapping.
\begin{assumption}[Well-posedness]\label{ass:ODasm}
The following hold:
\begin{enumerate}
\item[(i)] The dynamics \eqref{opmodel} are forward invariant in $[-1,\ 1]^n$, i.e.,
\begin{equation*} \label{eq:FIDyn}
x^0, p^0, d^0\in [-1,\ 1]^n \, \Rightarrow \, x^k \in [-1,\ 1]^n,  \;  \forall k\in\mathbb{N}_0.
\end{equation*}

\item[(ii)] The dynamics \eqref{opmodel} are uniformly exponentially stable with constant $p,d$, and admit a unique steady-state map $h:  \big([-1,\ 1]^n\big)^2 \rightarrow [-1,\ 1]^n$ satisfying $$h(p,d) = f(h(p,d),p,d),\quad \forall p,d \in [-1,\ 1]^n.$$

\item[(iii)] The map $h(p,d)$ is continuously-differentiable and $L-$Lipschitz \cite{nesterov} with respect to $p$.
{\hfill $\square$}
\end{enumerate}
\end{assumption}

This assumption can be easily adopted for a wide array of renowned opinion dynamics models, such as De-Groot model \cite{DG}, FJ model \cite{FJ}, and polar opinion dynamics model \cite{polaropiniondynamics}, among others, for which the influence of a recommender system can be, in principle, included. For an extensive list of such models, we refer the reader to the survey \cite{PROSKURNIKOV}. Traditionally, in fact, these models  consider social interactions and prejudice as the only driving forces behind opinion formation. Nonetheless, recommendations on contentious topics provably influence user opinions \cite{rossi}, underscoring the need to include their influence in the opinion dynamics. Next, we illustrate an example of how the FJ model, one of the most widely used opinion dynamics models across different communities \cite{LZ-QB-ZZ:21,AM-ET-PT:17,CM-CM-CET:18,BD-KJ-OS:15,AG-ET-PT:13,rossi,JC-SK-SKE-UW-SD-SI:24}, can be extended to capture the effects of the recommender on networked users.  
\begin{example}[Extended Friedkin--Johnsen model]\label{ex:FJ}
The opinions evolve according to
\begin{equation} \label{eq:FJdyn}
    x^{k+1} = (I_n - \Gamma_p - \Gamma_d)Ax^k + \Gamma_p p^k + \Gamma_d d,
\end{equation}
where $\Gamma_d, \Gamma_p$ are positive diagonal matrices such that $\Gamma_p + \Gamma_d \preceq I_n$, describing the impact of $d$ and $p$ over the opinions, and $A$ is a row-stochastic adjacency matrix encoding the social interconnections of the users. 
The dynamics \eqref{eq:FJdyn} are forward-invariant in $[-1,1]^n$, since the opinions are a convex combination of $x,p,d\in[-1,1]^n$.
Further, for a constant position provided by the recommender system, the steady-state mapping is single-valued, affine (hence, continuously differentiable), and reads as
\begin{equation}
h(p,d) = (I_n - (I_n - \Gamma_p - \Gamma_d)A)^{-1} (\Gamma_p p + \Gamma_d d ).\notag
\end{equation}
%
Note that by taking $\Gamma_p = O_n$ and $d=x^0$ to be an initial bias, the dynamics \eqref{eq:FJdyn} reduce to the standard FJ model \cite{FJ}.
{\hfill $\triangle$}
\end{example}

Engagement of the users over the platform is measured in terms of CTR, measured as number of clicks over number of provided news articles.
As common practice in the context of \emph{implicit feedback} recommender systems \cite{MW-MG-XZ-KZ:18}, for a generic user $i \in [n]$ holding an opinion $x_i$, we model the probability of $i$ clicking on a recommendation with position $ p_i$ as a random variable $c_i$ sampled from a Bernoulli distribution with unknown argument $g_i(p_i,x_i)$, i.e.  $c_i \sim \mathcal{B}(g_i(p_i,x_i))$. The argument $g_i(p_i,x_i)$ represents the \textit{clicking behaviour} of user $i$ and depends on the user's opinion $x_i$ as well as the position of the recommendation $p_i$. We will work under the following regularity assumptions on the clicking behaviour:
\begin{assumption}[Clicking behaviour]\label{measurementmodelassumpt}
The clicking~behaviour $g(p,x): [-1,1]^n\times[-1,1]^n \to [0,1]^n $ is
$M_x$~-~Lipschitz with respect to $x$, and $L_p$- and $L_x$-smooth with respect to $p$ and $x$, respectively.
{
  \hfill $  \square$
    }
\end{assumption}

Next, we provide an example of a clicking behaviour.
\begin{example}[Extremity confirmation bias \cite{rossi}]\label{ex:clickingbehaviour}
A user $i~\in~[n]$, holding opinion $x_i$, affected by \emph{extremity confirmation bias} clicks on a recommendation $ p_i$ with probability 
\begin{equation}\label{eq:CA}
    c_i \sim \mathcal{B}\left(\frac{1}{2} + \frac{1}{2}x_i p_i \right),\notag
\end{equation}
where the clicking behaviour $g_i(p_i,x_i) = \frac{1}{2} + \frac{1}{2}x_ip_i$ models confirmation bias towards extreme recommendations. In fact, for an opinion $x_i \approx \pm 1$  and a position $p_i \approx \pm 1$ ($\mp1$), the probability of clicking is almost $1$ ($0$). The click event becomes random  as the user position approaches the neutral stance $p_i \approx 0$,  highlighting diminished engagement for less polarized content or when recommendations directly counter the user's stance on the issue.
{\hfill $\triangle$}
\end{example}

\subsection{Problem formulation}
The goal of the recommender system is to provide recommendations that optimize a specific metric, denoted as $\varphi(p,x)$. Typical choices in the literature include maximizing CTR \cite{rossi,JC-SK-SKE-UW-SD-SI:24}, opinion control \cite{opinioncontrol}, and  polarization mitigation \cite{polarizationcontrol}. Deviating from traditional recommender systems, we consider a multi-objective cost function combining engagement maximization and polarization mitigation, given by  
\begin{equation} \label{eq:obj_rec}
\varphi(p,x) = \varphi^{\text{ctr}}(p,x) + \gamma \, \varphi^{\text{pol}}(x),
\end{equation}
where $\gamma$ is a non-negative trade-off parameter.
The engagement-related term in \eqref{eq:obj_rec} is the CTR, formalized as
\begin{equation*}
\varphi^{\text{ctr}}(p,x) = -\sum_{i \in [n]} \mathbb{E}_{c_i \sim \mathbb{B}(g_i(x_i,p_i))}[\,c_i\,],
\end{equation*}
where $\mathbb{E}_{c_i \sim \mathbb{B}(g_i(x_i,p_i))}[\,c_i\,]$ is the expectation of user $i$'s clicking, given their opinion $x_i$ and a recommendation with position $p_i$.
The second term in \eqref{eq:obj_rec}, $\varphi^{\rm pol}$, is a smooth and Lipschitz function that measures opinion polarization. For our algorithm we consider it to be $\varphi^{\text{pol}}(x) := \sum_{i \in[n]} s_i(x)$, where $s_i$ penalizes deviations from the moderate (zero) opinions:
\begin{align}\label{polarizationcost}
s_i(x_i) = \begin{cases}
            (x_i - \epsilon_1)^2 & x_i<\epsilon_1\\
            0 & \epsilon_1 \leq x_i \leq \epsilon_2\\
            (\epsilon_2 - x_i)^2 & x_i>\epsilon_2
        \end{cases} 
\end{align}
Therein, the parameters $\epsilon_1 \leq \epsilon_2$ are used to control the degree of penalty towards extreme opinions. Specifically, a smaller positive (negative) choice for $\epsilon_2$ ($\epsilon_1$) indicates a higher penalty on extreme positive (negative) opinions. According to \cite{Bramson}, the metric in \eqref{polarizationcost} falls in the class of polarization spread, indicating how extreme the opinions are. 
Note that, for $\epsilon_1=\epsilon_2=0$, the term \eqref{polarizationcost} reduces to $\varphi^{\text{pol}}(x) = \lVert x \rVert^2$, another classic measure for opinion polarization \cite{AM-ET-PT:17}. 

The recommender system aims at finding the recommendation that drives the user population with dynamics \eqref{opmodel} to a steady-state minimizing the multi-objective cost \eqref{eq:obj_rec}:
\begin{subequations}
\label{opt}
\begin{align} 
    \underset{p,\, x}{\mathrm{minimize}} &\quad   \varphi(p,x) \label{opt_cost} \\
    \label{eq:ssconstr}
    \textrm{s.t.}  & \quad x = h(p,d) \\
                  & \quad p \in [-1,1]^n
\end{align}
\end{subequations}

This problem is uncertain and potentially non-convex, since the analytical form of the clicking behaviour is unknown and likely complex. Likewise, the opinion dynamics \eqref{opmodel} are unknown.
If the opinion dynamics in \eqref{opmodel} and the clicking behavior $g_i$ were known, and an accurate prediction of external influences $d$ were available, then one could aim to solve \eqref{opt} offline. However, in practice, none of these information is readily available, and the recommender system can provide personalized positions solely relying on clicks $\{c_i^k\}$.
\begin{remark}\label{rem:stationarity}
    Equation \eqref{eq:ssconstr} reflects the assumption that users' opinions reach steady-state before the recommender system takes action. This approach is in line with most of the existing works on recommender systems that incorporates preference dynamics \cite{JC-SK-SKE-UW-SD-SI:24,rossi,SD-JM:22,pagan2023classification,NEL-KL-AB-AF-YL:21}, where recommendation decisions are made based on the steady-state behavior of the users' preferences under the model.
\end{remark}


Our feedback control problem is formally stated as follows:
\begin{problem} \label{pr:problem1}
Design a feedback controller $p(c)$ so that \eqref{opmodel} converges to a solution $(p^*,x^*)$ of the optimization problem \eqref{opt}, assuming that only clicks $c_i^k$ are available. The clicking behaviours $g_i(p_i,x_i)$, steady-state map $h(p,d)$, the opinion dynamics \eqref{opmodel}, external influence $d$ are unknown, and the real-time opinions $x$ are not measurable.
\end{problem}

\section{Recommender system design}\label{sec:recommenderdesign}
We approach Problem \ref{pr:problem1} by designing a dynamic feedback controller inspired by the projected-gradient descent algorithm in \cite[Section 3A]{belgioioso2021sampled}.
The resulting recommender system dynamically generates recommendations as
\begin{equation}\label{PGD_ones}
  p^{k+1} = \Pi_{[-1,1]^n} \big[p^k - \eta\, \Phi(p^k,x^{k+1})\big], \quad \forall k \in \mathbb{N}
\end{equation}
where $\eta$ is a tunable controller gain, and 
\begin{align} \label{eq:grad}
\Phi(p,x) := \nabla_p \varphi(p,x) + \nabla_p h(p,d)^\top \nabla_x \varphi(p,x)
\end{align}
represents the gradient obtained by applying the chain-rule of differentiation to the cost $\varphi(p,x)$ in \eqref{opt}, with opinions at steady state, i.e., $x=h(p,d)$. In practice, evaluating the gradient \eqref{eq:grad} at each sampling instant $k$ requires access to:
\begin{enumerate}
\item[(i)] Real-time users' opinions $x^k$;
\item[(ii)] Sensitivity mapping $\nabla_p h(p,d)$; and
\item[(iii)] Gradients $\nabla_p \varphi(p,x)$ and $\nabla_x \varphi(p,x)$.
\end{enumerate}

None of the above information is readily available, making a direct implementation of the recommender design \eqref{PGD_ones} impractical. In fact, the users' opinion are not accessible in practice, their dynamics are generally unknown (and so are their sensitivities), and the gradients $\nabla_p \varphi$, $\nabla_x \varphi$ depend on the users' clicking behaviour $g(p,x)$, which is also unknown. Instead, we have access only to the users' clicks $\{c^k\}$ from which we compute their CTRs $y_i^k := \sum_{t=k-T}^k c_i^t/(T+1), \forall \ k\geq T$, for some time window $T$. This can be regarded as an approximation of the expectation of a click, i.e., $\mathbb{E}\big[\mathcal{B}(g(p,x))\big] = g(p,x)$.


To cope with these challenges, we augment the controller \eqref{PGD_ones} with three auxiliary blocks associated with  users' opinions, sensitivity, and clicking behaviour estimates in real-time. Specifically, we structure the design on three levels:
\begin{enumerate}[(i)]
\item Estimation of real-time opinions and users' clicking behaviour via supervised learning;
\item Online sensitivity learning via Kalman filtering; and
\item Gradient estimation via a forward difference method.
\end{enumerate}

The resulting control architecture is illustrated in  Fig.~\ref{fig:blockdiagram_2} and summarized in Algorithm \ref{alg:recommender}.
In the following three sections, we describe each level.


\begin{figure*}
    \begin{center}
    \scalebox{0.65}{\begin{tikzpicture}
    \draw[blue,ultra thick] (0.2,5) rectangle (2.1,6);
        \filldraw[black] (0.3,5.75) circle (0pt) node[anchor=west]{\scriptsize {Training data}};
        \filldraw[black] (0.15,5.3) circle (0pt) node[anchor=west]{\scriptsize $\big\{\mathcal{X}_p,\mathcal{X}_x,\mathcal{X}_y\big\} $};
        \draw[thick,->](1.2,5) -| (1.2,4);
         \draw[red,ultra thick] (-0.6,2) rectangle (2.75,4);
        \filldraw[black] (-0.1,2.75) circle (0pt) node[anchor=west]{\includegraphics[scale=0.1]{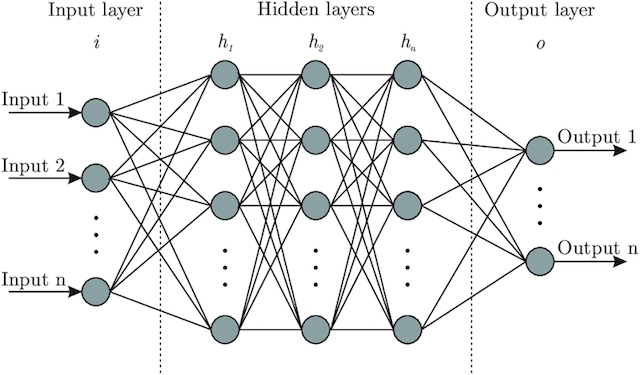}};
        \filldraw[black] (-0.3,3.75) circle (0pt) node[anchor=west]{\textbf{Neural network}};
        \draw[thick,-](2.75,3) -- (5.75,3);
        \filldraw[black] (2.8,3.25) circle (0pt) node[anchor=west]{\scriptsize Opinion estimate $\hat{x}^+$};
        \draw[thick,->](2.75,2.3) -- (4,2.3) -| (4,0.25);
        \filldraw[black] (2.8,2.75) circle (0pt) node[anchor=west]{\scriptsize Clicking behaviour};
        \filldraw[black] (2.8,2.5) circle (0pt) node[anchor=west]{\scriptsize estimate $
        \hat{g}$};
        \draw[thick,->](-2.4,3.7) -- (-0.6,3.7);
        \draw[thick,-] (5.75,3) -- (6.2,3.15);
        \filldraw[black] (5.75,3.4) circle (0pt) node[anchor=west]{$\zeta$};
        \filldraw[black] (6.75,3.25) circle (0pt) node[anchor=west]{$\Delta \hat{x}$};
        \filldraw[black] (6.75,2) circle (0pt) node[anchor=west]{$\Delta p$};
        \draw[thick,->] (6.2,3) -- (7.5,3);
        \draw[red,ultra thick] (7.5,1.5) rectangle (11.5,3.25);
        \filldraw[black] (7.75,2.75) circle (0pt) node[anchor=west]{ \textbf{Sensitivity estimation}};
        \filldraw[black] (8.25,2) circle (0pt) node[anchor=west]{Kalman Filter};
        \draw[thick,->] (11.5,2.25) -- (13,2.25) -- (13,-2.25) -- (7,-2.25);
        \filldraw[black] (11.5,2.5) circle (0pt) node[anchor=west]{\scriptsize $\nabla_p \hat{h}$};
        \filldraw[black] (11.5,2.8) circle (0pt) node[anchor=west]{\scriptsize Sensitivity estimate};
        \filldraw[black] (2.5,-0.4) circle (0pt) node[anchor=west]{$\nabla_p \hat{\varphi}^{\mathrm{ctr}}(p,\hat{x}^+), \nabla_x \hat{\varphi}^{\mathrm{ctr}}(p,\hat{x}^+)$ };
        \filldraw[black] (4,-0.9) circle (0pt) node[anchor=west]{ $\nabla_x {\varphi}^{\mathrm{pol}}(\hat{x}^+)$};
        \draw[thick,->] (6.5,3) -| (6.5,0.25);
        \filldraw[black] (2,0) circle (0pt) node[anchor=west]{\textbf{Zero/first-order gradient estimation}};
        \draw[thick,->] (4.75,-1.25) -| (4.75,-1.75);
        \draw[red,ultra thick] (2,-1.25) rectangle (7.75,0.25);
        \filldraw[black] (2.6,-2) circle (0pt) node[anchor=west]{\textbf{Projected gradient descent}};
        \filldraw[black] (2.5,-2.4) circle (0pt) node[anchor=west]{ $p^+ = \Pi_{[-1,1]^n}[p - \eta\hat{\Phi}(p,\hat{x}^+)]$};
        \draw[red,ultra thick] (2.5,-2.75) rectangle (7,-1.75);
        \draw[thick,->] (2.5,-2.25) -- (-5.5,-2.25) -| (-6,3.5) -- (-5.25,3.5);
         \filldraw[black] (0.25,-2) circle (0pt) node[anchor=west]{ \scriptsize Positions $p$};
         \draw[thick,->] (-1.5,-2.25) -| (-1.5,2.5) -- (-0.6,2.5);
         \draw[thick,->] (-1.5,1.75) -- (7.5,1.75);
         \draw[thick,->] (3,1.75) -| (3,0.25);
         \filldraw[black] (6.5,3) circle (1pt) node[anchor=west]{};
         \filldraw[black] (-1.5,1.75) circle (1pt) node[anchor=west]{};
         \filldraw[black] (3,1.75) circle (1pt) node[anchor=west]{};
         
         \filldraw[black] (-5.1,4) circle (0pt) node[anchor=west]{\textbf{Social platform}};
         \filldraw[black] (-5.1,3.55) circle (0pt) node[anchor=west]{$x^+ = f(x,p,d)$};
         \filldraw[black] (-5.2,3.1) circle (0pt) node[anchor=west]{ $c_i \sim \mathcal{B}(g_i(p_i,x_i))$};
         \draw[blue,ultra thick] (-5.25,2.75) rectangle (-2.4,4.25);
         \filldraw[black] (-2.4,4.2) circle (0pt) node[anchor=west]{\scriptsize Observed clicking};
         \filldraw[black] (-2.4,3.9) circle (0pt) node[anchor=west]{\scriptsize ratio $y$};
         \draw[thick,->] (-3.8,5.25)  -| (-3.8,4.25);
         \filldraw[black] (-5,5.5) circle (0pt) node[anchor=west]{\scriptsize External influence $d$};

         \draw[dotted,thick,-] (-6.25,1.25) -| (-6.25,6.5) -- (5.3,6.5)-| (5.3,1.25) -- (-6.25,1.25);
         \filldraw[black] (-6.25,6.2) circle (0pt) node[anchor=west]{\textbf{Level-I}};
         \draw[dotted,thick,-] (5.7,1.25) -| (5.7,4.25) -- (14,4.25)-| (14,1.25) -- (5.7,1.25);
         \filldraw[black] (5.7,4) circle (0pt) node[anchor=west]{\textbf{Level-II}};
         \draw[dotted,thick,-] (0,-3) -| (0,0.75) -- (8,0.75)-| (8,-3) -- (0,-3);
         \filldraw[black] (0,0.45) circle (0pt) node[anchor=west]{\textbf{Level-III}};  
\end{tikzpicture}}
\end{center}
    \caption{Block diagram of the proposed recommender system design with the three levels. Level I: Opinion and clicking behaviour estimation, Level II: Sensitivity estimation, Level III: Optimization.}
    \label{fig:blockdiagram_2}
\end{figure*}
\subsection{Level I: Opinions \& Clicking Behaviour Estimation}
The opinions and clicking behaviours are estimated using an Artificial Neural Network (ANN). 

We estimate the opinions by relying on implicit feedback from the users, i.e., the CTR $y \approx \mathbb{E}\big[\mathcal{B}(g(p,x))\big]$ of the users on recommendations with position $p$. Thus, we seek a map to extract the opinions from the clicks and the positions, $x = \beta(y,p)$.  
To bound the resulting opinions and clicking behaviour estimation error from the ANN, we work under the following regularity assumptions \cite{Cothren,SarahDean_CEP}.


\begin{assumption}[Opinion \& Clicking behaviour maps] We assume that
\label{ANNassumption}
\begin{enumerate}[i)]
\item There exists a continuous map $\beta: [0,1]^n\times [-1,1]^n \to [-1,1]^n$ is such that $\beta(y,p) = x + \theta(x)$  with $\lVert\theta(x)\rVert \leq \theta_x $, for all $x \in \mathcal X \subseteq [-1,1]^n$, and $\theta_x < \infty$; moreover, the image $\beta(\mathcal{Y},\mathcal{P})$ is compact for any $\mathcal{Y} \subseteq [0,1]^n,\mathcal{P} \subseteq [-1,1]^n$.
\item There exists a constant $\alpha_{y}<\infty$ such that the composite function takes the form $g(p,\beta(y,p)) = y + \nabla_x g(p,x)^\top \theta(x) + \alpha(y)$, with $\lVert\alpha(y)\rVert \leq \alpha_{y}$, for all $y\in \mathcal{Y}$.    
{
  \hfill $  \square$
    }
\end{enumerate}
\end{assumption}
Assumption \ref{ANNassumption} i) guarantees the map $\beta(y,p)$ to provide an estimate on users' opinions upper bounded by $\theta_x$. Similarly, Assumption \ref{ANNassumption} ii), guarantees the (opinion) map $\beta(y,p)$ to (implicitly) provide an estimate on clicking behavior upper bounded by $M_x\theta_x + \alpha_y$, with $M_x$ resulting from Assumption \ref{measurementmodelassumpt}. We will qualitatively describe how these upper bound errors can be tightened in Section \ref{sec:cl_convergence}.  


In the remainder of this section, we denote by $\mathcal{X}_p , \mathcal{X}_x \in [-1,1]^{n \times m} , \mathcal{X}_{y} \in [0,1]^{n \times m}$ the data sets of the positions, steady-state opinions, and the CTR used for training of the ANN. The process to acquire the training data is explained in detail in Appendix \ref{trainingdata}.

\subsubsection{Opinion Estimation}
\label{op_est}
Here, we describe the training process to estimate ${\beta}$ in Assumption \ref{ANNassumption} using an ANN.

For each user $i\in[n]$, the input and output training sets are
\begin{equation*}
    \begin{bmatrix}
    \mathcal{X}_p\\
    \mathcal{X}_{y,i}
\end{bmatrix} \in \begin{Bmatrix}
    [-1,1]^{n \times m}\\
    [0,1]^{1 \times m}
\end{Bmatrix}, \mathcal{X}_{x,i} \in[-1,1]^{1 \times m},
\end{equation*}
respectively. Given that users interact with each other, the recommendations to users connected to user \(i\) will indirectly influence user \(i\)'s opinion \(x_i\). However, \(x_i\) is not directly affected by the CTR of other users. To model this, we design the input layer with \(n+1\) neurons for each user: \(n\) neurons associated with \(p\) and one neuron associated with \(y_i\). The output layer comprises a single neuron to represent \(x_i\). Additionally, we incorporate an intermediate layer with \(n+2\) neurons. The intermediate layer employs a hyperbolic tangent activation function \(\tanh(x): \mathbb{R} \rightarrow [-1,1]\), while the output layer uses an identity activation function.


Next, we provide an upper bound for the opinion estimation error $e_x:= h(p,d)-\hat{\beta}(y,p)$, where $\hat{\beta}$ is the opinion estimate map learned by the neural network. 
\begin{lemma}[Opinion estimation error]\label{est_opinion}
Under Assumption~\ref{ANNassumption}, the opinion estimation error is upper bounded as $\lVert e_x\rVert \leq\epsilon_x$ with
\begin{align}
    \epsilon_x :=\sqrt{n}\Big[&3\underset{y\in\mathcal{X}_y,p\in\mathcal{X}_p}{\mathrm{sup}}\lVert\beta(y,p) - \hat{\beta}(y,p)\rVert_\infty + 2\underset{i\in[n]}{\mathrm{sup}}\omega_{\beta_i}(\gamma_x) + \nonumber\\
    &\gamma_x\underset{i\in[n]}{\mathrm{sup}}|v_{0,i}|\Big]+\theta_x, \label{upperbound_ANN_x}
\end{align}
where $\underset{y\in\mathcal{X}_y,p\in\mathcal{X}_p}{\mathrm{sup}}\lVert\beta(y,p) - \hat{\beta}(y,p)\rVert_\infty$ represents the maximum opinion estimation error during training of the ANN, $\omega_{\beta_i}(\gamma_x)$ is the minimum modulus of continuity of $\beta_i$ on the training set as in Definition \ref{modcontinuity}, and $v_{0,i}$ denotes the bias from the hidden layer to the output layer of the ANN on user~$i$.
\end{lemma}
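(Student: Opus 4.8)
The plan is to split $e_x$ into three conceptually distinct sources of error — the irreducible bias $\theta(x)$ of the idealized inverse map $\beta$, the generalization (training) error of the network, and the error incurred by evaluating at an operating point that need not coincide with a training sample — and bound each in turn. First I would rewrite the error in terms of $\beta$ rather than $h$. Since $x=h(p,d)$ at steady state, Assumption~\ref{ANNassumption}~i) gives $h(p,d)=\beta(y,p)-\theta(x)$, hence
\[
e_x = \big(\beta(y,p)-\hat{\beta}(y,p)\big)-\theta(x).
\]
The triangle inequality together with $\lVert\theta(x)\rVert\le\theta_x$ yields $\lVert e_x\rVert\le \lVert\beta(y,p)-\hat{\beta}(y,p)\rVert+\theta_x$, which already accounts for the additive $\theta_x$ sitting outside the bracket in \eqref{upperbound_ANN_x}. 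Passing from the Euclidean to the infinity norm via $\lVert\cdot\rVert\le\sqrt{n}\,\lVert\cdot\rVert_\infty$ reduces everything to a per-user estimate of $|\beta_i(y,p)-\hat{\beta}_i(y,p)|$ and explains the prefactor $\sqrt{n}$.

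Next, for a fixed user $i$ I would introduce the nearest training sample $(y',p')$, which by construction of the data set lies within $\gamma_x$ of the operating point in the infinity norm. Splitting through it,
\[
|\beta_i(y,p)-\hat{\beta}_i(y,p)|\le \underbrace{|\beta_i(y,p)-\beta_i(y',p')|}_{\text{(a)}}+\underbrace{|\beta_i(y',p')-\hat{\beta}_i(y',p')|}_{\text{(b)}}+\underbrace{|\hat{\beta}_i(y',p')-\hat{\beta}_i(y,p)|}_{\text{(c)}}.
\]
Term (a) is controlled by the minimal modulus of continuity $\omega_{\beta_i}(\gamma_x)$ of the true map (Definition~\ref{modcontinuity}); term (b) is at most the worst-case training error $\sup_{y,p}\lVert\beta-\hat{\beta}\rVert_\infty$, since $(y',p')$ is a training point.

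The essential step is term (c), the local variation of the learned map between an on-grid and an off-grid point. I would bound it by the network's own minimal modulus of continuity $\omega_{\hat{\beta}_i}(\gamma_x)$ and then control the latter using both the architecture and the fact that $\hat{\beta}_i$ approximates $\beta_i$ on the data. Between two training points, the increment of $\hat{\beta}_i$ differs from that of $\beta_i$ by at most two training errors, giving $\omega_{\hat{\beta}_i}(\gamma_x)\le \omega_{\beta_i}(\gamma_x)+2\sup_{y,p}\lVert\beta-\hat{\beta}\rVert_\infty$, while the residual off-grid displacement of magnitude $\gamma_x$ through the single $\tanh$ hidden layer (each unit $1$-Lipschitz) and the identity output layer contributes the term $\gamma_x\,|v_{0,i}|$ involving the output-layer parameter. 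Summing (a)+(b)+(c) and taking the supremum over $i$ then reproduces exactly the quantity $3\sup_{y,p}\lVert\beta-\hat{\beta}\rVert_\infty+2\sup_i\omega_{\beta_i}(\gamma_x)+\gamma_x\sup_i|v_{0,i}|$ inside the bracket of \eqref{upperbound_ANN_x}, after reinserting the $\sqrt{n}$ and $\theta_x$ factors from the first step.

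I expect the main obstacle to be term (c): unlike (a) and (b), it cannot be read off from the data or from the regularity of $\beta$ alone, and requires quantifying how much the trained network can vary off the sampling grid. This is the only place where the network internals enter, and obtaining a clean bound of the form $\gamma_x\,|v_{0,i}|$ relies on the $\tanh$/identity structure together with the assumption that the training data form a sufficiently fine $\gamma_x$-net, so that every operating point admits a nearby sample. The somewhat conservative multiplicities ($3$ and $2$) are an artifact of chaining the learned network's variation back through the idealized map $\beta$ at the grid, rather than through a sharper architecture-specific Lipschitz estimate; a direct estimate would give smaller constants but would not match the statement as written.
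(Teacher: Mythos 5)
Your skeleton is the right one and agrees with the paper up to the point where the paper stops doing work itself: isolating $\theta(x)$ via Assumption~\ref{ANNassumption}(i) to produce the additive $\theta_x$, reducing to per-user bounds with $\lVert\cdot\rVert\le\sqrt{n}\,\lVert\cdot\rVert_\infty$, and splitting through the nearest training sample are all consistent with how the lemma is assembled. The crucial difference is that the paper never proves the per-user inequality $|\beta_i-\hat{\beta}_i|\le 3\sup|\beta_i-\hat{\beta}_i| + 2\omega_{\beta_i}(\gamma_x)+\gamma_x|v_{0,i}|$ from first principles: it obtains it by direct citation of an external result (Theorem~7 of the cited work of Marchi et al., via Corollary~5.2 of Tabuada--Gharesifard), after checking that the architecture qualifies because the input and output layers are linear maps. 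Your proposal attempts to re-derive precisely that cited inequality, and that is where it breaks.

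The genuine gap is your term (c). The comparison $\omega_{\hat{\beta}_i}(\gamma_x)\le\omega_{\beta_i}(\gamma_x)+2\sup\lVert\beta-\hat{\beta}\rVert_\infty$ is valid only between pairs of \emph{training} points, whereas in (c) one argument is off-grid; you patch this with an ``off-grid displacement'' bound of $\gamma_x|v_{0,i}|$, but this cannot follow from the Lipschitz argument you sketch: the output-layer \emph{bias} $v_{0,i}$ contributes a constant offset to $\hat{\beta}_i$ and has no effect whatsoever on its input-to-output sensitivity, so a $1$-Lipschitz-$\tanh$/identity-output estimate over a displacement of size $\gamma_x$ would instead produce $\gamma_x$ times a product of the hidden- and output-layer \emph{weight} norms, quantities controlled by nothing stated in the lemma. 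The appearance of $\gamma_x|v_{0,i}|$ is a structural consequence of the internal construction of the cited theorem (under its hypotheses on the trained network and the $\gamma_x$-cover of the data), not of a generic Lipschitz bound. So the step you yourself flag as the main obstacle is asserted rather than proved, and the justification offered for it would fail; to close the argument you must either import the external theorem, as the paper does, or reproduce its internal argument, which your proposal does not contain.
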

\begin{proof}
Refer to Appendix \ref{est_opinionproof}.    
\end{proof}


\subsubsection{Clicking Behaviour Estimation}\label{cb_est}
\begin{tac}
For each user $i\in[n]$, the input and output training sets are $\begin{bmatrix}
    \mathcal{X}_{p,i}\\
    \mathcal{X}_{x,i}
\end{bmatrix} \in 
[-1,1]^{2 \times m}$, $\mathcal{X}_{y,i} \in[0,1]^{1 \times m}$, respectively. We then construct an ANN similarly as for opinion estimation\footnote{We omit the details due to space limitations and refer the interested reader to \cite[]{} for a detail discussion.}.
\end{tac} 
\begin{arxiv}
To estimate the clicking behaviour, we proceed similarly as for opinion estimation. 
For each user $i\in[n]$, the input and output training sets are
\begin{equation*}
    \begin{bmatrix}
    \mathcal{X}_{p,i}\\
    \mathcal{X}_{x,i}
\end{bmatrix} \in [-1,1]^{2 \times m}, \mathcal{X}_{y,i} \in [0,1]^{1 \times m},
\end{equation*}
respectively. Given the recommender system's personalization of positions to the user's interests, the $i$-th user's CTR $y_i$ only depends on their steady-state opinion $x_i$ and the provided position $p_i$. Hence, we have two input neurons ($x_i,p_i$) and one neuron in the output layer ($y_i$). Further, we use $3$ intermediate layers with $5$ neurons. The hyperbolic tangent and the identity are used as activation functions at the intermediate layers and the output layer, respectively.
\end{arxiv}

We now derive an upper bound on the resulting clicking behaviour estimation error $e_y:= \hat{g}(p,\hat{x}) - g(p,h(p,d))$, where $\hat{x}$ and $\hat{g}$ represents the opinion and clicking behaviour learned by the ANN. 
\begin{lemma}[Clicking behaviour estimation error]
\label{est_clickingbehaviour}
Under Assumptions \ref{measurementmodelassumpt}--\ref{ANNassumption}, the clicking behaviour estimation error is upper-bounded as $\lVert e_y \rVert \leq \epsilon_g$, with
\begin{align} 
    \epsilon_g \leq \sqrt{n}\Big[&3\underset{p\in\mathcal{X}_p,x\in\mathcal{X}_x}{\mathrm{sup}}\lVert g(p,x) - \hat{g}(p,x)\rVert_\infty + 2\underset{i\in[n]}{\mathrm{sup}}\omega_{g_i}(\gamma_y) + \nonumber\\
    &\gamma_y\underset{i\in[n]}{\mathrm{sup}}|w_{0,i}|\Big]+M_x\epsilon_x + \alpha_y,\label{upperbound_ANN_c}
\end{align}
where $\underset{p\in\mathcal{X}_p,x\in\mathcal{X}_x}{\mathrm{sup}}\lVert g(p,x) - \hat{g}(p,x)\rVert_\infty$ is the maximum estimation error during ANN training, $\omega_{g_i}(\gamma_y)$ denotes the minimum modulus of continuity  of $g_i$ on the training set as in Definition~\ref{modcontinuity}, and $w_{0,i}$ represents the estimated bias weight from the hidden layer to the output layer of the ANN on user~$i$.
\end{lemma}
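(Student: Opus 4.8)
The plan is to follow the proof of Lemma~\ref{est_opinion} (Appendix~\ref{est_opinionproof}) almost verbatim, since the clicking-behaviour network $\hat g$ is trained with the same architecture and analysed with the same covering-based generalisation argument as the opinion network. The two features that make $e_y$ differ from $e_x$, and hence produce the extra terms $M_x\epsilon_x$ and $\alpha_y$ in \eqref{upperbound_ANN_c}, are that at deployment $\hat g$ is evaluated at the \emph{estimated} opinion $\hat x$ rather than at the true steady state $h(p,d)$, and that the network is fitted to measured click-through ratios, which differ from the true clicking behaviour $g$.

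First I would split the error through the intermediate quantity $g(p,\hat x)$, writing
\begin{equation*}
e_y = \big(\hat g(p,\hat x) - g(p,\hat x)\big) + \big(g(p,\hat x) - g(p,h(p,d))\big).
\end{equation*}
The second bracket only measures the sensitivity of $g$ to a perturbation of its opinion argument: invoking the $M_x$-Lipschitzness of $g$ in $x$ from Assumption~\ref{measurementmodelassumpt} together with the opinion bound of Lemma~\ref{est_opinion} gives $\lVert g(p,\hat x) - g(p,h(p,d))\rVert \le M_x\lVert \hat x - h(p,d)\rVert = M_x\lVert e_x\rVert \le M_x\epsilon_x$, which is exactly the propagation term sitting outside the bracket in \eqref{upperbound_ANN_c}.

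For the first bracket -- the generalisation error of $\hat g$ at the test point $(p,\hat x)$ -- I would replay the covering argument of Lemma~\ref{est_opinion} componentwise. For each user $i$ I pick the nearest training sample within $\gamma_y$ in the $\lVert\cdot\rVert_\infty$ sense and telescope through it, so that the modulus of continuity of the true map $g_i$, that of the learned map $\hat g_i$ (itself controlled by the training residual and the output-layer parameter $|w_{0,i}|$ over the $\gamma_y$-grid, as in Definition~\ref{modcontinuity}), and the on-sample fit $\sup\lVert g-\hat g\rVert_\infty$ all contribute; collecting the per-user $\ell_\infty$ estimates and passing to the Euclidean norm produces the factor $\sqrt n$ and the block $\sqrt n\big[3\sup\lVert g-\hat g\rVert_\infty + 2\sup_i\omega_{g_i}(\gamma_y) + \gamma_y\sup_i|w_{0,i}|\big]$. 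The remaining gap between the measured click-through targets used in training and the true values of $g$ at the sampled points is precisely the residual bounded by Assumption~\ref{ANNassumption}~ii), which enters additively as $\alpha_y$ and completes \eqref{upperbound_ANN_c}.

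The hard part will be the composition itself, which is the only genuinely new step relative to Lemma~\ref{est_opinion}: $\hat g$ must be evaluated at $\hat x$, an estimate that need not lie on the training manifold, so the generalisation bound and the Lipschitz-propagation bound cannot share a reference point and must be decoupled as above. Some care is also needed to avoid double counting the opinion error: the $M_x\epsilon_x$ term already charges it once through the $x$-sensitivity of $g$, so the covering argument for the first bracket must be carried out at the \emph{fixed} argument $\hat x$, and the click-through-versus-$g$ mismatch must be attributed solely to $\alpha_y$ via Assumption~\ref{ANNassumption}~ii) rather than folded into the training residual.
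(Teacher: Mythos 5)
Your proposal is correct and follows essentially the same route as the paper: the paper's proof uses exactly your decomposition $e_y = \big(\hat g(p,\hat x)-g(p,\hat x)\big) + \big(g(p,\hat x)-g(p,h(p,d))\big)$, bounds the first bracket with the same per-user ANN generalisation result used in Lemma~\ref{est_opinion} (yielding the $\sqrt{n}\big[3\,\mathrm{sup}\lVert g-\hat g\rVert_\infty + 2\,\mathrm{sup}_i\,\omega_{g_i}(\gamma_y) + \gamma_y\,\mathrm{sup}_i|w_{0,i}|\big]$ block), and bounds the second bracket by $M_x\epsilon_x$ via Assumption~\ref{measurementmodelassumpt} and Lemma~\ref{est_opinion}. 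The only (cosmetic) difference is where $\alpha_y$ is charged: the paper extracts it as the higher-order remainder of a Taylor expansion of $g(p,h(p,d)+e_x)$ inside the second bracket, invoking Assumption~\ref{ANNassumption}~ii), whereas you attribute it to the measured-CTR-versus-true-$g$ training-target mismatch in the first bracket; both are additive invocations of the same assumption and produce the identical bound \eqref{upperbound_ANN_c}.
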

\begin{tac}
\begin{proof}
The proof is similar to that of Lemma \ref{est_opinion} and hence is omitted due to space limitation. We refer the interested reader to the extended version of this paper \cite{}.
\end{proof}
\end{tac}

\begin{arxiv}
\begin{proof}
Refer to Appendix \ref{est_clickingbehaviourproof}.
\end{proof}
\end{arxiv}
The values $\gamma_x,\gamma_y$ and the modulus of continuity are properties of the training set and can be determined before training the ANN. The maximum training error term in Lemmas \ref{est_opinion}, \ref{est_clickingbehaviour} can be ascertained after training the ANN. The values $v_{0,i},w_{0,i}$ are the bias weights from the penultimate to the final layer in the ANN, which are known after training the ANN.

\subsection{Level II: Online Sensitivity Learning}
To estimate the sensitivity $\nabla_p h(p,d)$ in \eqref{eq:grad}, we develop  a Kalman filter, similar to \cite{picallo}. Given that opinion dynamics and recommendations evolve over time scales that are comparable in magnitude, in order to make our OFO scheme \cite{hauswirth1} applicable, we introduce an auxiliary mechanism  that guarantees time-scale separation  between the opinion dynamics and the recommender system updates, see Remark \ref{rem:stationarity}.

Note that the input-output sensitivity brings information about interpersonal influence among agents, in fact, if $\nabla_p h_{ij}(p,d)\neq 0$ it means that the position provided to agent $j$ influences agent $i$'s opinion, hence $j$ and $i$ are connected over the platform. Thus, by estimating the sensitivity we are able to perform network-aware recommendations. We will see later in this manuscript how this turns out to be beneficial in attenuating polarization.

\subsubsection{Kalman Filter for Sensitivity Learning}
We denote by $\ell \in \mathbb{R}^{n^2}$ the vectorized sensitivity $\nabla_p h(p,d) \in \mathbb{R}^{n \times n}$, namely, $\ell := \textrm{vec}(\nabla_p h(p,d))$, and model the sensitivity dynamics as the following random walk:
\begin{equation} \label{processmodelKF}
    \ell^k = \ell^{k-1} + w^{k-1}, 
\end{equation} 
where, $w^k\sim\mathcal{N}(0_{n^2},Q^k)$ is the process noise, with $Q^k$ as its corresponding covariance matrix. The measurement model is 
\begin{equation}\label{measurementmodelKF}
    \Delta x_{\rm ss}^{k+1,k}  = \Delta \Tilde{p}^{k,k-1} \ell^k + v^k,
\end{equation}
where $\Delta x_{\rm ss}^{k+1,k}:=h(p^k,d) - h(p^{k-1},d)$ is the change in steady-state opinions for a corresponding change in positions $\Delta p^{k,k-1} := p^k-p^{k-1}$, and $\Delta \Tilde{p}^{l,m}:= \Delta (p^{l,m})^\top \otimes I_n \in \mathbb{R}^{n \times n^2}$, where $\otimes$ indicates the Kronecker product. The measurement noise is described by $v^k\sim\mathcal{N}(0_n,R^k)$, where $R^k$ is its corresponding covariance matrix. This noise accounts for the contribution of the external influence $d$ to the change of opinions $\Delta x$. 
\begin{remark}
The measurement model \eqref{measurementmodelKF} is obtained from a first-order Taylor  approximation of the steady-state map around $p^{k-1}$, i.e., $h(p^k,d) \approx h(p^{k-1},d) + \nabla_p h(p^{k-1},d)^\top(p^k - p^{k-1})$. With this intuition, it can be seen that the measurement model is a linear time-variant approximation of the temporal evolution of the sensitivity $\nabla_p h(p^k,d)$. Note that, the steady-state $h(p^k,d)$ is reached only asymptotically. Thus, it is common practice \cite{picallo} to use instead $\Delta x^{k+1,k}_{ss} = x^{k+1} - x^k$ as the measurement in \eqref{measurementmodelKF}. 
{\hfill $\square$}
\end{remark}




The posterior update of sensitivity estimates $\hat \ell^k$ and covariance $\Sigma^k$  are given by
 \begin{align}\label{KFposterior_state}
  \hat{\ell}^k &= \hat{\ell}^{k-1} + \zeta^k K^{k-1}\big(\Delta \hat{x}^{k+1,\tau_i+1} - \Delta \Tilde{p}^{k,\tau_i}\hat{\ell}^{k-1}\big) \notag \\ 
    \Sigma^k &= \Sigma^{k-1} + \zeta^k\big(Q^k - K^{k-1}\Delta \Tilde{p}^{k,\tau_i}\Sigma^{k-1}\big),
\end{align}
where $\zeta$ enforces an auxiliary trigger mechanism, with
\begin{equation*}
    \zeta^k = \Bigg\{\begin{array}{ll}
        1, & k = iT \\
        0, & \rm otherwise 
    \end{array}.
\end{equation*}
The trigger mechanism is introduced for the recommender system to update positions close to steady state, see Remark \ref{rem:stationarity}, and to collect a sufficient number of clicks to ensure a proper estimate of the CTR needed for opinion estimation.
 The trigger mechanism is turned on at integer multiples $i\in \mathbb{N} $ of the time period $T$. We define \(\mathcal{T}\) as the set of all trigger time instances, with \(\tau_i \in \mathcal{T}\) representing the most recent trigger time instance prior to \(k\). 
Note that the update \eqref{KFposterior_state} uses the opinion estimate $\hat{x}$ introduced in \S\ref{op_est} rather than the real users' opinion, with $\hat{x}^{k+1} = \hat{\beta}(y^k,p^k)$, where $y^k := \sum_{t=\tau_i}^k c^t/(k-\tau_i+1)$ is the CTR since the most recent trigger time instant. Finally, the Kalman filter gain $K^k$ in \eqref{KFposterior_state} is given by
\begin{equation}\label{KFgain}
    K^k = \Sigma^k (\Delta\Tilde{p}^{k,\tau_i})^\top(R^k + \Delta \Tilde{p}^{k,\tau_i}\Sigma^k(\Delta \Tilde{p}^{k,\tau_i})^\top)^{-1},
\end{equation}
with $\Delta \tilde{p}^{k,\tau_i}$ and $R^k$ as in the measurement model \eqref{measurementmodelKF}. Next, we postulate some regularity assumptions for the process and measurement models.

\begin{assumption}[Gaussian noise]\label{IIDassumption}
    The process and measurement noise $w,v$ are white Gaussian. Moreover, the steady-state opinion estimation error $e_x=h(p,d)-\hat{\beta}(y,p)$ is uncorrelated with $w$ and $v$. {
  \hfill $  \square$
    }
\end{assumption}

The assumption of white noise for the process model is standard in the context of sensitivity learning in feedback optimization \cite{picallo}. Intuitively, the process perturbation determines the degree of trust one puts on the sensitivity estimates.  It is also reasonable to assume that the external influence is uncorrelated among users in certain cases, for example the extended FJ model in \eqref{eq:FJdyn} where $d= x_0$. We also note that since $\hat{x}$ is estimated using the ANN, the steady-state estimation error $e_x$ is not correlated with the process or measurement noises in \eqref{processmodelKF}, \eqref{measurementmodelKF}. Under Assumption \ref{IIDassumption}, the covariances simplify as $Q^k = (\sigma_q^k)^2I_{n^2}$ and $R^k = (\sigma_r^k)^2I_n$, for some $\sigma_q$, $\sigma_r >0$. The tuning of $\sigma_q,\sigma_r$, is based on heuristics and is described in Appendix~\ref{HPtuning}.

To ensure that the sensitivity is correctly inferred, we must guarantee that the input positions $\Delta p$ are persistently exciting (see Definition \ref{POEhenkel}). This translates to $\Delta p$ being sufficiently variable in order to explore the underlying user opinion dynamics.
To do so, we work under the following assumption.
\begin{assumption}[Persistency of excitation] \label{POEassumption}
    The inputs $\Delta p$ are persistently exciting (see Definition \ref{POEhenkel}) and the trigger mechanism in \eqref{KFposterior_state} is activated at least $S+1$ times, namely $|\mathcal{T}|\geq S+1$.
    {
  \hfill $  \square$
    }
\end{assumption}
In practice, this assumption can be satisfied by introducing a dither signal in the update rule \eqref{PGD_ones}.


\subsection{Level III: Gradient Estimation \& Optimization}
We describe the gradient estimation method for the engagement cost function in \eqref{eq:obj_rec}. We remark that this algorithmic step is needed as the clicking behaviour is unknown. We then describe the OFO update rule from \eqref{PGD_ones}, incorporating gradient, sensitivity and state estimation. 

\subsubsection{Gradient Estimation}
To estimate the gradient of the engagement maximization cost $\varphi^{\rm ctr}$, we use the \textit{finite forward difference} method   \cite{katya}
\begin{equation}\label{gradapprox_x}
    \nabla_x\hat{\varphi}^{\rm ctr}_i(p,x) = \frac{\hat{\varphi}^{\rm ctr}(p,x + \mu e_i)-\hat{\varphi}^{\rm ctr}(p,x)}{\mu}, 
\end{equation}
\begin{equation}\label{gradapprox_p}
     \nabla_p\hat{\varphi}^{\rm ctr}_i(p,x) = \frac{\hat{\varphi}^{\rm ctr}(p + \mu e_i,x)-\hat{\varphi}^{\rm ctr}(p,x)}{\mu},
\end{equation}
where $\nabla_j\hat{\varphi}^{\rm ctr}_i, j\in\{p,x\}$ denotes the $i$-th entry of the gradient, $e_i \in \mathbb{R}^n$ refers to the $i$-th vector of the canonical basis of $\mathbb{R}^n$, and $\mu$ is a smoothing parameter. Note that in the two-point gradient estimation, we do not actuate the opinion dynamics twice but rather use the engagement cost function estimate $\hat{\varphi}^{\rm ctr}(p,x) = -1_n^\top \hat{g}(p,x)$ as described in \S \ref{cb_est}.

The smoothing parameter $\mu$ is chosen small enough so that the gradient estimate provides a good approximation of the true value. However, choosing $\mu$ in \eqref{gradapprox_x}--\eqref{gradapprox_p} too small can lead to numerical instability, as formalized in the following.
\begin{lemma}[Gradient estimation error]
\label{closeness}
    Under Assumptions \ref{measurementmodelassumpt}--\ref{ANNassumption}, the gradient estimation error is upper bounded as 
    \begin{equation*}\label{closeness_eqn}
        \lVert\nabla_j \hat{\varphi}^{\rm ctr}-\nabla_j \varphi^{\rm ctr}\rVert \leq \frac{1}{2}L_x\mu + 2\frac{\sqrt{n}\epsilon_g}{\mu}, \quad j\in\{x,p\}.
    \end{equation*}
    The upper bound attains its lowest value $2n^{3/4}\sqrt{\epsilon_g L_j}$ at $\mu^* = 2n^{1/4}\sqrt{{\epsilon_g}/{L_j}}$, with $ j\in \{x,p\}$.
\end{lemma}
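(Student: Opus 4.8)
The plan is to bound the error coordinatewise and then to split each coordinate's error, via the triangle inequality, into a \emph{discretization} part (coming from replacing the true gradient by a finite difference of the \emph{exact} cost) and an \emph{estimation} part (coming from replacing $\varphi^{\rm ctr}$ by its surrogate $\hat{\varphi}^{\rm ctr}=-1_n^\top\hat g$). Concretely, for each $i\in[n]$ and $j\in\{x,p\}$ I would introduce the finite difference of the exact cost $D_i := (\varphi^{\rm ctr}(\cdot+\mu e_i)-\varphi^{\rm ctr}(\cdot))/\mu$ and write
\[
|\nabla_j\hat{\varphi}^{\rm ctr}_i - \nabla_j\varphi^{\rm ctr}_i| \le |\nabla_j\hat{\varphi}^{\rm ctr}_i - D_i| + |D_i - \nabla_j\varphi^{\rm ctr}_i|,
\]
so that the first summand isolates the surrogate error and the second the Taylor truncation error. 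This is the decomposition that makes the $\mu$-dependence of the two sources transparent.

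For the discretization term I would first note that $\varphi^{\rm ctr}=-1_n^\top g$ is $L_j$-smooth in the direction $j$: because each $g_i$ depends only on its own pair $(p_i,x_i)$, perturbing coordinate $i$ activates a single summand, so the relevant second derivative of $\varphi^{\rm ctr}$ along $e_i$ is controlled by the smoothness constant of $g$ from Assumption~\ref{measurementmodelassumpt}. Applying the $\beta$-smoothness inequality of Lemma~\ref{Lsmoothlemma} with $x_1=\cdot+\mu e_i$, $x_2=\cdot$ (and its mirror image to obtain an absolute value) and dividing by $\mu$ gives $|D_i-\nabla_j\varphi^{\rm ctr}_i|\le \tfrac12 L_j\mu$. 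For the estimation term I would invoke Lemma~\ref{est_clickingbehaviour}: since $\lVert\hat g-g\rVert\le\epsilon_g$, the Cauchy--Schwarz inequality yields $|\hat{\varphi}^{\rm ctr}-\varphi^{\rm ctr}|=|1_n^\top(\hat g-g)|\le\sqrt n\,\epsilon_g$ at \emph{every} evaluation point; as the numerator of $\nabla_j\hat{\varphi}^{\rm ctr}_i-D_i$ is the difference of two such discrepancies, the triangle inequality and division by $\mu$ give $2\sqrt n\,\epsilon_g/\mu$. Summing the two contributions produces the per-coordinate bound $\tfrac12 L_j\mu+2\sqrt n\,\epsilon_g/\mu$; collecting the $n$ coordinates through $\lVert v\rVert\le\sqrt n\,\lVert v\rVert_\infty$ yields the Euclidean bound, and finally minimizing the scalar map $\mu\mapsto a\mu+b/\mu$ with $a,b>0$ by its first-order condition returns $\mu^*=\sqrt{b/a}$ and minimum value $2\sqrt{ab}$, which specializes to the stated $\mu^*=2n^{1/4}\sqrt{\epsilon_g/L_j}$ and optimal value $2n^{3/4}\sqrt{\epsilon_g L_j}$.

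The main obstacle is conceptual rather than computational: one must recognize that the surrogate error $\epsilon_g$ enters \emph{divided} by $\mu$ and therefore does not vanish as the smoothing parameter shrinks but in fact blows up, in sharp contrast to exact-oracle finite differencing where $\mu$ can be sent to zero. Making this precise requires keeping the two error mechanisms cleanly separated and tracking the $\sqrt n$ factors consistently -- one arising when passing from the per-user estimate $\lVert\hat g-g\rVert\le\epsilon_g$ to the scalar cost $\varphi^{\rm ctr}$ via Cauchy--Schwarz, and a second when assembling the coordinatewise bounds into a Euclidean norm. The remaining trade-off between $\tfrac12 L_j\mu$ and the $\epsilon_g/\mu$ term is then settled by an elementary AM--GM argument, and it is precisely this trade-off that pins down the non-degenerate optimal smoothing parameter $\mu^*$.
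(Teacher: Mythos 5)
Your proof follows essentially the same route as the paper's: the same decomposition into a Taylor-truncation term controlled by the smoothness of $g$ (the paper invokes Lemma~\ref{Lsmoothlemma} and adds/subtracts the surrogate costs where you pass through the exact finite difference $D_i$, a cosmetic difference) and a surrogate-error term bounded by $2\sqrt{n}\,\epsilon_g/\mu$ via Lemma~\ref{est_clickingbehaviour} and Cauchy--Schwarz, followed by the same first-order-condition minimization over $\mu$. If anything, your explicit coordinate aggregation $\lVert v\rVert \leq \sqrt{n}\,\lVert v\rVert_\infty$ is tidier than the paper's proof, which derives only the per-coordinate bound $\tfrac{1}{2}L_x\mu + 2\sqrt{n}\,\epsilon_g/\mu$ and asserts it verbatim for the Euclidean norm, yet reports the minimum $2n^{3/4}\sqrt{\epsilon_g L_j}$ that corresponds exactly to your $\sqrt{n}$-scaled bound (the unscaled bound's minimum would be $2n^{1/4}\sqrt{\epsilon_g L_j}$), so your bookkeeping is the one that renders the lemma's two claims mutually consistent.
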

\begin{proof}
    Refer to Appendix \ref{closenessproof}.
\end{proof}

\subsubsection{Projected Gradient Descent Algorithm}
We now present the projected-gradient update in \eqref{PGD_ones} augmented with sensitivity, state, and gradient estimations. The augmented update reads compactly as
\begin{equation}\label{PGD_updated}
    p^{k+1} = \Pi_{[-1,1]^n}\Big[p^k - \zeta^k\eta \,\hat{\Phi}^k \Big],
\end{equation}
where the gradient surrogate $\hat{\Phi}$ is constructed by combining sensitivity, opinion, and gradient estimates as
\begin{align}
 \hat{\Phi}^k =& \nabla_p \hat{\varphi}^{\rm ctr}(p^k,\hat{x}^{k+1}) + (\hat{H}^k)^\top\nabla_x \hat{\varphi}^{\rm ctr}(p^k,\hat{x}^{k+1}) \nonumber\\
 &+\gamma(\hat{H}^k)^\top\nabla_x\varphi^{\rm pol}(\hat{x}^{k+1}) - w_{\rm pe}^k/\eta, \label{compositegradientestimate}
\end{align}
where $\hat{H}^k := \nabla_p \hat{h}(p^k,d)$ is the sensitivity estimate at time $k$. The additional term $w_{\rm pe}^k \sim \mathcal{N}(0_n,\sigma_{\rm pe}^2I_n)$ is a dither signal that ensures persistency of excitation of the inputs (Assumption \ref{POEassumption}).

In Algorithm \ref{alg:recommender},  we summarize the pseudo-code of the proposed recommender system design. In the offline phase, training of the neural network is carried out for opinion and {clicking behaviour} estimation. In the online phase, a new sensitivity estimate is generated and new positions are provided to the users periodically, every $T$ time steps.

\begin{algorithm}
\caption{$\textrm{Recommender}[T,n]$} \label{alg:recommender}
    \begin{algorithmic}
    \STATE \textbf{Initialization}
    \STATE Collect data using Algorithm \ref{alg:training}
    \STATE Build opinion and {clicking behaviour} estimators ($\hat{\beta},\hat{g}$) 
    \STATE \textbf{Optimization phase}
    \FOR{$k\geq 0$}
        \STATE Collect clicks $c_i^k\sim\mathcal{B}\big(g_i(p_i^k,x_i^k)\big)$ from users
        \STATE CTR $y^k \leftarrow \frac{\sum_{t=\tau_i}^k c^t}{k-\tau_i+1}, \tau_i=(i-1)T<k$
        \STATE Estimate opinions $\hat{x}_i^{k+1} \leftarrow \hat{\beta}_i(y_i^k,p^k)$
        \IF{$\zeta^k=1$}
            \STATE $\mathcal{T} \leftarrow \textrm{append}[k]$
            \STATE Estimate sensitivity $\hat{H}^k$ using \eqref{KFposterior_state}--\eqref{KFgain}
            \STATE Estimate gradient using \eqref{gradapprox_x}--\eqref{gradapprox_p}
            \STATE Update positions $p^{k+1}$ using \eqref{PGD_updated}
        \ELSE
            \STATE $\hat{H}^k \leftarrow \hat{H}^{k-1}$
            \STATE $p^{k+1} \leftarrow p^k$
        \ENDIF
    \ENDFOR
    \end{algorithmic}
\end{algorithm}

\subsection{Closed-loop Convergence Guarantees}\label{sec:cl_convergence}
Here, we establish convergence guarantees for the sensitivity estimation process \eqref{KFposterior_state} and for the closed-loop interconnection between the opinions and recommendations.
Before stating the formal results, we make some important observations regarding our opinion estimation technique and its implication on the closed-loop convergence analysis.

In our problem setup, the measurement information at each time instant (i.e., a single click/no click data), is not rich enough to produce accurate estimates for the users' real-time opinion.
As shown in Algorithm~\ref{alg:recommender}, the opinion estimation is instead carried out using the CTR recorded on multiple recommendations with a constant position since the previous trigger instant. In light of the exponential stability of the opinion dynamics (i.e., Assumption \ref{ass:ODasm}), the resulting estimate at the end of each trigger period is, in practice, close to an estimate of the steady-state opinion $h(p^k,d)$. This approach is therefore substantially different from real-time state estimation \cite{Cothren}. Thus, as we directly estimate steady-state opinions at each trigger instant, we can greatly simplify our closed-loop stability analysis by treating the opinion dynamics as a static mapping.
Also note that as the trigger period $T$ increases, the resulting CTR offers a better approximation of the CTR in expectation $\mathbb{E}\big[g(p,x)\big]$. The modeling errors $\theta_x,\alpha_y$ in Lemma~\ref{est_opinion},~\ref{est_clickingbehaviour} exist due to a finite time horizon $T$ and reduces as $T$ increases. 

Now, we state the main convergence result with respect to the sensitivity estimation error.
\begin{theorem}[Sensitivity estimate convergence]
\label{sensitivityconvergence}
    Under Assumptions \ref{ass:ODasm}, \ref{IIDassumption}, and \ref{POEassumption}, the sensitivity estimation error  $e^k:=\ell^k-\hat{\ell}^k$ has bias and variance bounded in norm, i.e.,  there exist positive constants $c_1,c_2,C_f < \infty$ and $\xi\in(0,1)$ such that
\begin{align}
\lVert \mathbb{E}[e^k]\rVert &\leq 2K_m\epsilon_x/(1-c_1\xi^{c_2T}) + (c_1\xi^{c_2T})^{|\mathcal{T}|}\lVert \mathbb{E}[e^0]\rVert \nonumber\\
\mathbb{E}\Big[\lVert e^k\rVert^2\Big] &\leq C_f + (c_1\xi^{c_2T})^{2|\mathcal{T}|}\mathbb{E}\Big[\lVert e^0\rVert^2\Big] \nonumber
\end{align}
    where $C_f = \frac{1}{1-(c_1\xi^{c_2T})^2}\big[ (T-1)\overline{\sigma}_q^2 + K_m^2\big(\overline{\sigma}_r^2 + 2\epsilon_x^2\big)\big]$ with $\overline{\sigma}_q = \underset{k\in\mathbb{N}_0}{\mathrm{sup}}{\sigma}_q^k$, $\overline{\sigma}_r = \underset{k\in\mathbb{N}_0}{\mathrm{sup}}{\sigma}_r^k$, $K_m = \underset{k\in\mathbb{N}_0}{\mathrm{sup}} \lVert K^k\rVert$, $c_1\xi^{c_2T}<1$.
\end{theorem}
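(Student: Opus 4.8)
The plan is to analyze the estimation error $e^k = \ell^k - \hat{\ell}^k$ directly at the trigger instants $\tau_i \in \mathcal{T}$, since between triggers $\hat{\ell}$ is frozen while the true sensitivity $\ell$ keeps performing the random walk \eqref{processmodelKF}. First I would derive the one-trigger error recursion by subtracting the posterior update \eqref{KFposterior_state} (taken with $\zeta^k=1$) from the accumulated process model $\ell^{\tau_{i+1}} = \ell^{\tau_i} + \sum_t w^t$. The crucial modeling point is that the filter ingests the \emph{estimated} increment $\Delta\hat{x}$ rather than the true steady-state increment $\Delta x_{\rm ss}$ of the measurement \eqref{measurementmodelKF}; writing $\hat{x} = h(p,d) - e_x$ gives $\Delta\hat{x} = \Delta x_{\rm ss} - \Delta e_x$ with $\Delta e_x$ a difference of two opinion errors, so $\|\Delta e_x\| \le 2\epsilon_x$ by Lemma~\ref{est_opinion}. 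This yields a recursion of the schematic form
\begin{equation*}
e^{\tau_{i+1}} = \Psi_i\, e^{\tau_i} + \textstyle\sum_t w^t - K^{\tau_i} v^{\tau_i} + K^{\tau_i}\Delta e_x,
\end{equation*}
where $\Psi_i := I_{n^2} - K^{\tau_i}\Delta\tilde{p}^{\tau_i,\tau_{i-1}}$ is the measurement-update error-transition matrix and the summed process noise carries at most $T-1$ increments.

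Next I would split the analysis into a bias and a mean-square part. Taking expectations annihilates the zero-mean noises $w^t, v^{\tau_i}$ (white and, by Assumption~\ref{IIDassumption}, uncorrelated with $e_x$), leaving the deterministic recursion $\mathbb{E}[e^{\tau_{i+1}}] = \Psi_i\,\mathbb{E}[e^{\tau_i}] + K^{\tau_i}\mathbb{E}[\Delta e_x]$, whose forcing term is bounded by $2K_m\epsilon_x$. For the second moment I would expand $\mathbb{E}[\|e^{\tau_{i+1}}\|^2]$, use the cross-term cancellation coming from whiteness and uncorrelation, and bound each contribution separately: the accumulated process noise by $(T-1)\overline{\sigma}_q^2$, the measurement noise by $K_m^2\overline{\sigma}_r^2$, and the opinion-error term by $K_m^2$ times a constant multiple of $\epsilon_x^2$.

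The key step is to show that the products of the error-transition matrices $\Psi_i$ contract geometrically across triggers, producing a uniform effective per-trigger factor $\rho := c_1\xi^{c_2 T} < 1$. This is where I would invoke Assumption~\ref{POEassumption}: persistency of excitation of $\Delta p$ over $S+1$ triggers (Definition~\ref{POEhenkel}) renders the measurement model \eqref{measurementmodelKF} uniformly observable, so the standard Kalman-filter stability theory (cf.\ \cite{picallo}) guarantees exponential stability of the error dynamics with a uniform contraction rate; the exponential stability of the opinion dynamics (Assumption~\ref{ass:ODasm}(ii)) then enters through the settling of $x^{\tau_i+1}$ toward $h(p^{\tau_i},d)$ over the window $T$, sharpening the effective measurement and producing the explicit $\xi^{c_2 T}$ dependence. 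I expect this to be the main obstacle, since it requires translating the rank condition of Definition~\ref{POEhenkel} into a quantitative lower bound on the observability Gramian, and hence into the norm bound $\rho < 1$, while simultaneously tracking the $T$-dependence inherited from the opinion dynamics.

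Finally, with $\rho<1$ established, both recursions reduce to scalar comparison inequalities of the form $b_{i+1} \le \rho\, b_i + \delta$ and $V_{i+1} \le \rho^2 V_i + \kappa$. Unrolling the geometric series over the $|\mathcal{T}|$ triggers gives $b_{|\mathcal{T}|} \le \delta/(1-\rho) + \rho^{|\mathcal{T}|} b_0$ and $V_{|\mathcal{T}|} \le \kappa/(1-\rho^2) + \rho^{2|\mathcal{T}|} V_0$, which, after substituting $\delta = 2K_m\epsilon_x$ and $\kappa = (T-1)\overline{\sigma}_q^2 + K_m^2(\overline{\sigma}_r^2 + 2\epsilon_x^2)$, reproduce exactly the claimed bias and variance bounds with the stated $C_f$.
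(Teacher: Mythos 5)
Your proposal follows essentially the same route as the paper's proof: the same trigger-to-trigger error recursion with forcing term $K^{\tau_i}\Delta e_x$ bounded by $2\epsilon_x$ via Lemma~\ref{est_opinion}, cross-term cancellation under Assumption~\ref{IIDassumption}, a uniform contraction factor $c_1\xi^{c_2T}<1$ obtained from uniform asymptotic stability of the Kalman filter via a persistency-of-excitation-based bound on the observability Gramian (exactly the step the paper carries out explicitly, bounding $W_O$ between $\frac{\beta_1}{\alpha_2}I_{n^2}$ and $\frac{\beta_2}{\alpha_1}I_{n^2}$), and geometric unrolling of the resulting scalar recursions to produce the stated bias bound and $C_f$. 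The only detail the paper includes that you subsume under ``standard Kalman-filter stability theory'' is the verification of uniform complete controllability of the pair $(I_{n^2},\sqrt{Q^k})$, which is immediate from the positive definiteness of $Q^k$ by the tuning in Appendix~\ref{HPtuning}.
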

\begin{proof}
    Refer to Appendix \ref{sensitivityconvergenceproof}.
\end{proof}

Note that both bias and variance upper bounds depend on the opinion estimation error $\epsilon_x$. Further, note that increasing the sampling period $T$ reduces the bias and variance error through the term $1-c_1\xi^{c_2T}$. This is expected as increasing $T$ guarantees greater time-scale separation between opinion dynamics and recommendations. Finally, the upper bound on the sensitivity error variance is proportional to the noise variance of the opinions through the term $\overline{\sigma}_r^2$. 

To quantify performance on the recommendations, we use the \textit{fixed-point residual mapping} \cite[Eq. (5)]{gradientmapping}
\begin{equation}\label{gradientmapping}
    \mathcal{G}(p) := \frac{1}{\eta}\Big(p - \Pi_{[-1,1]^n} \big[p - \eta\Phi(p,h(p,d)) \big]\Big).
\end{equation}
The fixed-point residual mapping is zero at any critical point of \eqref{opt}, and is a common metric to quantify convergence of iterative algorithms in non-convex regimes \cite{nesterov}.

The following theorem shows the convergence of the closed-loop system, combining Levels I, II, and III in Figure \ref{fig:blockdiagram_2}, using the second moment of the fixed-point residual mapping. Since the problem is non-convex, the trajectories of $\{p^k\}_{k=0}^N$ vary across each trial. Thus, it is essential to evaluate $\lVert \mathcal{G}(p^k) \rVert$ with its expectation. Moreover, we compute the average of this metric over all the trigger time instances to reflect the average performance of Algorithm~\ref{alg:recommender} over the time in which recommendations are provided. The theoretical bound is only computed over the trigger time instances since the positions remain constant between two trigger instances, and so does the residual $\lVert \mathcal{G}(p^k) \rVert$.
\begin{theorem}[Closed-loop convergence]
\label{OFOconvergence}
    Let Assumptions \ref{ass:ODasm}--\ref{POEassumption} hold. For any $\eta \in (0,\frac{1}{2L'})$ and $\mu^*= 2n^{1/4}\sqrt{{\epsilon_g}/{L_j}}, j\in\{p,x\}$, the sequence $\{p^k\}_{k \in \mathbb{N}}$ generated by \eqref{PGD_updated} satisfies
    \begin{align}
        \frac{1}{|\mathcal{T}|}\sum_{\underset{l\leq k}{l\in\mathcal{T}}}\mathbb{E}\Big[\lVert\mathcal{G}(p^l)\rVert^2\Big] &\leq K_1, \quad \forall
        k\geq T \label{OFOconvergenceequation}
    \end{align}
where the upper bound $K_1$ is given by
    \begin{align*}
        &K_1=\frac{6}{1-2\eta L'}\Bigg(\underset{\kappa_1}{\underbrace{\frac{\varphi(p^0,h(p^0,d))-\varphi^{*}}{3\eta|\mathcal{T}|}}} +\frac{\sigma_{\rm pe}^2}{\eta^2}+4\Bigg(\underset{\kappa_2}{\underbrace{\gamma^2L^2\epsilon_x^2}} + \\
        &\underset{\kappa_3}{\underbrace{n^{3/2}\epsilon_g\!(L_p\!\!+\!\!L^2L_x)}}\!\Bigg)\!\!\!+\!\!{2(2n\gamma^2\!\!+\!\!M_x^2\!\!+\!\!4n^{3/2}\epsilon_gL_x)}\frac{\sum_{\underset{l\leq k}{l\in\mathcal{T}}}\mathbb{E}[\lVert e^l \rVert^2]}{|\mathcal{T}|} \ \!\!\Bigg),
    \end{align*}
    In there,  $L'=L_p + L^2(L_x + 2)$ is the Lipschitz constant of the gradient $\Phi(\cdot,h(\cdot,d))$, and $\varphi^{*}={\mathrm{inf}}_{p\in[-1,1]^n}\varphi(p,h(p,d))>-n$ is the cost function value at a locally optimal solution. 
\end{theorem}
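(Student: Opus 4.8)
The plan is to view the steady-state cost as a function of the recommendation alone, $\phi(p) := \varphi(p, h(p,d))$, whose gradient is exactly $\Phi(p, h(p,d))$ by the chain rule \eqref{eq:grad} and which is $L'$-smooth with $L' = L_p + L^2(L_x+2)$ (composing the $L_p$-, $L_x$-smoothness of the clicking behaviour and the polarization cost with the $L$-Lipschitz steady-state map, Assumptions \ref{ass:ODasm} and \ref{measurementmodelassumpt}). Because the positions, and hence $\phi$ and $\mathcal{G}$, are frozen between consecutive trigger instants (the \texttt{else} branch of Algorithm \ref{alg:recommender}), it suffices to establish a per-step decrease at the trigger times $l \in \mathcal{T}$, where the full projected update $p^{l+1} = \Pi_{[-1,1]^n}[p^l - \eta\hat\Phi^l]$ is executed.

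First I would apply the descent lemma (Lemma \ref{Lsmoothlemma}) to $\phi$ across a trigger step, writing $G^l := \frac{1}{\eta}(p^l - p^{l+1})$ for the realized gradient mapping, so that $\phi(p^{l+1}) \le \phi(p^l) - \eta\langle \Phi^l, G^l\rangle + \frac{L'\eta^2}{2}\lVert G^l\rVert^2$, where $\Phi^l := \Phi(p^l,h(p^l,d))$. The variational inequality characterizing the Euclidean projection gives $\langle \hat\Phi^l, G^l\rangle \ge \lVert G^l\rVert^2$; substituting $\Phi^l = \hat\Phi^l - \delta^l$, with $\delta^l := \hat\Phi^l - \Phi^l$ the total surrogate-gradient error, and applying Young's inequality to $\langle\delta^l, G^l\rangle$, yields a one-step bound of the form $\phi(p^{l+1}) \le \phi(p^l) - c_1\eta\lVert G^l\rVert^2 + c_2\eta\lVert\delta^l\rVert^2$ with $c_1,c_2$ functions of $L'\eta$. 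Non-expansiveness of $\Pi_{[-1,1]^n}$ then gives $\lVert G^l - \mathcal{G}(p^l)\rVert \le \lVert\delta^l\rVert$, hence $\lVert G^l\rVert^2 \ge \frac12\lVert\mathcal{G}(p^l)\rVert^2 - \lVert\delta^l\rVert^2$, which converts the decrease into one controlling the fixed-point residual $\lVert\mathcal{G}(p^l)\rVert^2$ of \eqref{gradientmapping}, at the cost of additional $\lVert\delta^l\rVert^2$ terms.

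The crux, and the step I expect to be the main obstacle, is bounding $\lVert\delta^l\rVert^2$. Separating the dither via \eqref{compositegradientestimate}, $\delta^l = (\tilde\Phi^l - \Phi^l) - w_{\rm pe}^l/\eta$ where $\tilde\Phi^l$ collects the first three terms, I would expand $\tilde\Phi^l - \Phi^l$ block by block (engagement $p$-gradient, sensitivity-weighted engagement $x$-gradient, and sensitivity-weighted polarization gradient) by adding and subtracting the exact quantities, so that each block splits into three controllable errors: (a) the finite-difference/ANN gradient-estimation error bounded by Lemma \ref{closeness} (the $\epsilon_g$ and $n^{3/2}\epsilon_g$ contributions, giving $\kappa_3 = n^{3/2}\epsilon_g(L_p + L^2 L_x)$); (b) the opinion-estimation error $\lVert\hat x^{l+1} - h(p^l,d)\rVert \le \epsilon_x$ from Lemma \ref{est_opinion}, passed through the $L_x,L_p$-smoothness of $g$ and the $L$-Lipschitz sensitivity to produce $\kappa_2 = \gamma^2 L^2\epsilon_x^2$; and (c) the sensitivity-estimation error, with $\lVert\hat H^l - \nabla_p h(p^l,d)\rVert_F = \lVert e^l\rVert$ (since $\mathrm{vec}(\hat H^l)=\hat\ell^l$ and $\mathrm{vec}(\nabla_p h)=\ell^l$), multiplied by the bounded true gradients (the engagement gradient via the $M_x$-Lipschitzness of $g$ in Assumption \ref{measurementmodelassumpt}, the polarization gradient via compactness of the opinion set), yielding the coefficient $2(2n\gamma^2 + M_x^2 + 4n^{3/2}\epsilon_g L_x)$ on $\lVert e^l\rVert^2$. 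The delicate bookkeeping lies in the cross term between sensitivity error and gradient-estimation error (the $n^{3/2}\epsilon_g L_x$ piece) and in the repeated use of $\lVert a+b+c\rVert^2 \le 3(\lVert a\rVert^2+\lVert b\rVert^2+\lVert c\rVert^2)$, which is the source of the internal factor $3$.

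Finally, I would take expectations: the zero-mean dither $w_{\rm pe}^l$ contributes only its variance, giving the $\sigma_{\rm pe}^2/\eta^2$ term, while the stochastic sensitivity error enters through $\mathbb{E}[\lVert e^l\rVert^2]$, which is retained explicitly in $K_1$ (and can be further bounded via Theorem \ref{sensitivityconvergence}). Summing the one-step inequality over all trigger instants $l \in \mathcal{T}$ with $l \le k$, the objective differences telescope to $\phi(p^0) - \phi(p^{k}) \le \varphi(p^0,h(p^0,d)) - \varphi^*$, which is finite since $\varphi^* > -n$; dividing by $|\mathcal{T}|$ produces $\kappa_1 = \frac{\varphi(p^0,h(p^0,d)) - \varphi^*}{3\eta|\mathcal{T}|}$ together with the time-averaged error terms. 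Collecting constants and using $\eta < 1/(2L')$ to lower-bound the residual coefficient $c_1$ by a multiple of $(1-2\eta L')$ gives the stated prefactor $6/(1-2\eta L')$, and hence the bound $K_1$, completing the proof.
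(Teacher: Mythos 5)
Your proposal is correct in substance and reproduces the right bound structure, but it reaches the per-step decrease by a genuinely different descent argument than the paper. The paper introduces the virtual exact-gradient iterate $\overline{p}^{k+1} = \Pi_{[-1,1]^n}[p^k - \eta\Phi(p^k,h(p^k,d))]$ and applies the projection lemma of \cite{reddi2016fast} (Lemma \ref{supplementary_1}) \emph{twice} --- once with $y=\overline{p}^{k+1}$, $z=p^k$ and once with $y=p^{k+1}$, $z=\overline{p}^{k+1}$ --- then adds the two inequalities and absorbs the cross term $T_1$ by Cauchy--Schwarz and Young; since $\overline{p}^{k+1}$ is exactly the point defining $\mathcal{G}(p^k)$ in \eqref{gradientmapping}, the fixed-point residual appears directly, with no conversion step. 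You instead run the classical inexact projected-gradient chain: descent lemma on $\phi(p)=\varphi(p,h(p,d))$, the projection variational inequality $\langle\hat\Phi^l,G^l\rangle\geq\lVert G^l\rVert^2$, Young's inequality on $\langle\delta^l,G^l\rangle$, and nonexpansiveness of $\Pi_{[-1,1]^n}$ to pass from the realized mapping $G^l$ to $\mathcal{G}(p^l)$ via $\lVert G^l\rVert^2\geq\tfrac12\lVert\mathcal{G}(p^l)\rVert^2-\lVert\delta^l\rVert^2$. Both routes are valid under $\eta\in(0,\tfrac{1}{2L'})$ (your coefficient on $\lVert G^l\rVert^2$ is $-\tfrac{\eta}{2}(1-L'\eta)$, lower-bounded by a multiple of $1-2\eta L'$); yours is more elementary and self-contained, but the nonexpansiveness conversion inflates the error coefficients, so matching the exact constants in $K_1$ (the prefactor $6$, which in the paper comes from the six-term splitting of $T_3$ rather than your factor $3$, the $3\eta$ in $\kappa_1$, and the $4$ multiplying $\kappa_2+\kappa_3$) would require rearranging the bookkeeping --- as stated, your argument yields a bound of the same form with possibly larger numerical constants. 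Your decomposition of the surrogate error $\delta^l$ is otherwise term-for-term the paper's $T_3$ analysis (Lemma \ref{closeness} for $\kappa_3$, Frobenius-norm sensitivity error for the coefficient $2(2n\gamma^2+M_x^2+4n^{3/2}\epsilon_gL_x)$, dither variance $\sigma_{\rm pe}^2/\eta^2$, telescoping with $\varphi^{*}$), including the correct observation that the $n^{3/2}\epsilon_gL_x$ cross term arises because the \emph{estimated} gradient multiplies $\hat{H}^l$. One small misattribution: $\kappa_2=\gamma^2L^2\epsilon_x^2$ comes from the $2$-smoothness of the polarization cost $\varphi^{\rm pol}$ composed with $\lVert H^k\rVert\leq L$ --- the engagement gradient's opinion-estimation error is already absorbed into $\epsilon_g$ via Lemma \ref{est_clickingbehaviour} --- not from the smoothness of $g$; this does not affect the validity of your structure.
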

\begin{proof}
    Refer to Appendix \ref{OFOconvergenceproof}.
\end{proof}
  Note that an upper bound on the minimum of $\lVert \mathcal{G}(p^k) \rVert$ can be inferred from \eqref{OFOconvergenceequation}, through the inequality $\mathbb{E}\Big[\underset{l\in\mathcal{T}}{\mathrm{min}}\lVert\mathcal{G}(p^l)\rVert^2\Big] \leq \mathbb{E}\Big[\frac{1}{|\mathcal{T}|}\sum_{l\in\mathcal{T}}\lVert\mathcal{G}(p^l)\rVert^2\Big] = \frac{1}{|\mathcal{T}|}\mathbb{E}\Big[\sum_{l\in\mathcal{T}}\lVert\mathcal{G}(p^l)\rVert^2\Big]$, see \cite[Remark 5]{zhiyu}. Thus, intuitively, the bound in \eqref{OFOconvergenceequation} describes the maximum possible minimum gap between the trajectory $\{p^k,h(p^k,d)\}_{k=0}^N$ obtained using Algorithm \ref{alg:recommender} and a local minima $(p^*,h(p^*,d))$.  

The bound $K_1$ shows that achieving an accurate solution of \eqref{opt}, i.e., a local minima $(p^*,h(p^*,d))$, is limited by the deviation of the initial cost at $(p^0,h(p^0,d))$ from the optimal one $\varphi^{*}$ (i.e., the term $\kappa_1$), the polarization and engagement gradient estimation errors (i.e., the terms $\kappa_2,\kappa_3$, respectively) including both opinion and clicking behaviour estimation errors, the variance of the dither signal $\sigma_{\rm pe}^2$, and the sensitivity estimation error variance $\mathbb{E}[\lVert e^l \rVert^2]$.   

\begin{corollary}[Asymptotic convergence]
    As $|\mathcal{T}|\rightarrow\infty$, the upper bound $K_1$ in in Theorem \ref{OFOconvergence} approaches $K_2$, given by
    \begin{align*}
        K_2 = \frac{6}{1-2\eta L'}\Bigg(& 2C_f(2n\gamma^2 + M_x^2 + 4n^{3/2}\epsilon_gL_x) + \\
        & 4(\kappa_2 + \kappa_3) + \frac{\sigma_{\rm pe}^2}{\eta^2} \Bigg) 
    \end{align*}
\end{corollary}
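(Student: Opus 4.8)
The plan is to pass to the limit $|\mathcal{T}|\to\infty$ in the closed-form expression for $K_1$ from Theorem~\ref{OFOconvergence}, treating each additive contribution separately and isolating those that depend on $|\mathcal{T}|$. The overall prefactor $\frac{6}{1-2\eta L'}$, the dither-variance term $\sigma_{\rm pe}^2/\eta^2$, and the gradient-estimation terms $4(\kappa_2+\kappa_3)$ are all independent of $|\mathcal{T}|$ and therefore carry over unchanged into $K_2$; only two terms require attention.

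First, the initial-optimality-gap term $\kappa_1=\frac{\varphi(p^0,h(p^0,d))-\varphi^{*}}{3\eta|\mathcal{T}|}$ carries $|\mathcal{T}|$ in its denominator. Since $\varphi$ is bounded on the compact domain $[-1,1]^n\times[-1,1]^n$ (the CTR term is bounded and the polarization cost $\varphi^{\rm pol}$ is continuous on a compact set) and $\varphi^{*}>-n$, the numerator is finite; hence $\kappa_1\to 0$ as $|\mathcal{T}|\to\infty$, matching the absence of this term in $K_2$.

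The substantive step is the averaged sensitivity-error term $\frac{1}{|\mathcal{T}|}\sum_{l\in\mathcal{T},\,l\leq k}\mathbb{E}[\lVert e^l\rVert^2]$. I would substitute the per-instance variance bound from Theorem~\ref{sensitivityconvergence}: writing $\rho:=c_1\xi^{c_2T}\in(0,1)$, the $m$-th trigger instant $l$ satisfies $\mathbb{E}[\lVert e^l\rVert^2]\leq C_f+\rho^{2m}\mathbb{E}[\lVert e^0\rVert^2]$. Summing over the $|\mathcal{T}|$ trigger instants and dividing gives
\begin{equation*}
\frac{1}{|\mathcal{T}|}\sum_{m=1}^{|\mathcal{T}|}\mathbb{E}[\lVert e^l\rVert^2]\leq C_f+\frac{\mathbb{E}[\lVert e^0\rVert^2]}{|\mathcal{T}|}\sum_{m=1}^{|\mathcal{T}|}\rho^{2m}\leq C_f+\frac{\mathbb{E}[\lVert e^0\rVert^2]}{|\mathcal{T}|}\,\frac{\rho^2}{1-\rho^2}.
\end{equation*}
Because $\rho<1$, the geometric sum is bounded by the finite constant $\rho^2/(1-\rho^2)$, independent of $|\mathcal{T}|$; dividing by $|\mathcal{T}|$ therefore sends the transient contribution to zero, so the Ces\`aro average converges to at most $C_f$. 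Multiplying by the coefficient $2(2n\gamma^2+M_x^2+4n^{3/2}\epsilon_gL_x)$ reproduces exactly the leading term in $K_2$.

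Assembling the three observations — the vanishing of $\kappa_1$, the persistence of the $|\mathcal{T}|$-independent terms, and the convergence of the averaged sensitivity error to $C_f$ — and multiplying through by the common prefactor yields $\lim_{|\mathcal{T}|\to\infty}K_1=K_2$. The main obstacle is the Ces\`aro-average argument: one must verify that the transient term in Theorem~\ref{sensitivityconvergence} decays geometrically in the \emph{running count} of trigger instants, so that its partial sums stay uniformly bounded. This is precisely what the exponent $2|\mathcal{T}|$ in that bound encodes; without geometric-in-index decay, the average of the transients need not vanish, and the limit would differ from $K_2$.
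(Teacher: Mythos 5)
Your proposal is correct and follows essentially the same route as the paper: the paper's own argument (folded into Appendix F, culminating in the bound \eqref{sensitivity_informal}) likewise substitutes the per-trigger variance bound of Theorem \ref{sensitivityconvergence} into the sum over trigger instants, bounds the transient by the geometric series $\sum_m (c_1\xi^{c_2T})^{2m} \leq \frac{1}{1-(c_1\xi^{c_2T})^2}$, and observes that dividing by $|\mathcal{T}|$ sends both this transient and the initial-gap term $\kappa_1$ to zero, leaving $2C_f(2n\gamma^2+M_x^2+4n^{3/2}\epsilon_g L_x)+4(\kappa_2+\kappa_3)+\sigma_{\rm pe}^2/\eta^2$ inside the prefactor. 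Your explicit Ces\`aro-average treatment and the remark that the exponent $2|\mathcal{T}|$ must be read as geometric decay in the running trigger count are exactly the (implicit) content of the paper's final step.
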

\begin{proof}
    Refer to Appendix \ref{OFOconvergenceproof}.
\end{proof}

From $K_2$, we observe that the asymptotic accuracy is limited by a combination of gradient estimation errors, which include opinion and clicking behaviour estimation (i.e., the terms $\kappa_2,\kappa_3$), the sensitivity estimation error, and the variance of the dither signal $\sigma_{\rm pe}^2$. Moreover, the upper-bound of the average of the variance of sensitivity estimation error over the trigger time instances approaches $C_f$, the asymptotic sensitivity error variance. 

\section{Numerical Case Studies}\label{sec:results}
\subsection{Opinion Dynamics and Clicking Behaviour}
We assume the opinions of the users on the platform evolve according to the FJ  model as in Example \ref{ex:FJ}, i.e.,
\begin{equation}\label{FriedkinJohnsen}
        x^{k+1} = (I_n - \Gamma_p - \Gamma_d)Ax^k + \Gamma_p p^k + \Gamma_d d.\notag
\end{equation}

Further, we consider users with two different clicking behaviours:  confirmation bias towards extreme positions, $C_A$, defined as in Example \ref{ex:clickingbehaviour}, and confirmation bias towards any position, $C_B$, wherein the click event is given by
\begin{equation}\label{C_B}
    c_i \sim \mathcal{B}\Big(\frac{1}{2} + \frac{1}{2}e^{-4(x_i-p_i)^2} \Big).
\end{equation}
 
 Note that the choice of these clicking behaviors make the steady-state optimization problem \eqref{opt} non-convex. 

As a proof of concept, we consider a platform with $n=15$ users. We generate the adjacency matrix $A$ with some random entries zero and other entries sampled from $\mathcal{U}[0,1]$\footnote{This ensures that the non-zero weights in the graph are not insignificant.} such that $A\mathbf{1}_n\leq \mathbf{1}_n$. Thus, the resulting graph need not be strongly connected. We generate the initial opinions as $x^0\sim \mathcal{U}[-1,1]^n$. Moreover, we set $\Gamma_d = 0.5 \mathrm{diag}[|x^0|]$. This choice of $\Gamma_d$ indicates that the users who hold initial extreme opinions tend to be more stubborn. The external influence $d$ is modelled as a prejudice term defined as the initial opinion $x^0$. Further, $\Gamma_p$ is drawn from $\mathcal{D}_{[10^{-2},0.5]}^n$ so that all users are affected by the recommendations. To incorporate diversity in clicking behaviours, we randomly assign $8$ users to follow clicking behaviour $C_A$ and the remaining ones follow $C_B$.

\subsection{Experimental Setting and Hyperparameter Selection}

In the offline phase (Algorithm \ref{alg:training} in Appendix \ref{trainingdata}), we train the neural network for opinion and clicking behaviour estimation with $m=375$ training and $125$ testing points, simulation time $N=100$ and trigger period $T=60$, with the clicks being recorded in the duration $k\in[N-T,N]$. We set the trade-off factor $\gamma=1$ in \eqref{eq:obj_rec}, thus giving equal importance to engagement maximization and polarization mitigation.

In the online phase, we choose $\sigma_{\text{pe}} = 0.07$ in \eqref{PGD_updated} for the first $N/5$ time steps to facilitate sensitivity estimation\footnote{Sensitivity estimation can be halted after convergence on its entries.} and network learning, and $\sigma_{\text{pe}} = 0$ afterwards to favor accurate recommendations. 
In the Kalman filter \eqref{KFposterior_state}, we initialize $\hat{\ell}^0=0.5{\rm vec}(I_n)$ and $\Sigma^0=I_{n^2}$. For gradient estimation in \eqref{gradapprox_x}--\eqref{gradapprox_p}, we set the smoothing parameter $\mu=0.1$. While tuning the process noise standard deviation, we set the tuning parameter $M$ in \eqref{sigmaq} as $M=10$. 
 Further, we start with neutral recommendations, i.e., $p^0=0_n$, and consider a periodic trigger with period $T=60$. For the polarization cost $\varphi^{\rm pol}$ in \eqref{polarizationcost}, we set $\epsilon_1 = -0.5$ and $\epsilon_2 = 0.5$. All the simulations are conducted for $N=2.5\cdot10^{4}$ time instances and over $50$ Monte-Carlo trials. 
 Since clicks are stochastic, the sequence of clicks varies across each trial. Moreover, the dither signal $w_{\rm pe}$ sequence varies across each trial, thus leading to different sensitivity estimates.

\subsection{Architecture Comparisons}\label{study1}
In this section, we compare Algorithm 1 against other similar OFO approaches that benefit from more information.
\begin{table}[]
    \begin{center}
    \caption{Methods for comparison}\label{methodscomparisontable}
    \begin{tabular}{p{1.5cm}|p{1.5cm}|p{1.2cm}|p{2.2cm}}
    \hline
        Method & Sensitivity &  Opinions & Clicking behaviour\\
        \hline \hline 
        $M_1$ (Oracle) & \cmark & \cmark & \cmark\\ 
        $M_2$ & \xmark & \cmark & \cmark\\ 
        $M_3$ & \xmark & \xmark & \cmark\\ 
        $M_4$ (Alg. 1) & \xmark & \xmark & \xmark\\ 
        \hline
    \end{tabular}
    \end{center}
\end{table}
In Table \ref{methodscomparisontable}, we summarize all the methods used for comparison and their attributes. In particular, for the oracle (method $M_1$), we employ the standard projected-gradient controller \eqref{PGD_ones}, wherein opinions can be directly measured, and the sensitivity and users' clicking behavior are both known. 

Method $M_2$ uses online sensitivity estimation. Method $M_3$ uses opinion as well as sensitivity estimation. Finally, method $M_4$ corresponds to Algorithm 1, and includes also clicking behaviour and gradient estimation. All the methods' architecture can be extracted from Figure \ref{fig:blockdiagram_2} by replacing or removing some of the blocks.  Method $M_1$ would not require the ANN in Level-1, the Kalman filter in Level-2 and gradient estimation in Level-3. Method $M_2$ would not require the ANN and gradient estimation. Method $M_3$ would need the ANN only for opinion estimation, but does not require gradient estimation. 

Figure \ref{fig:study1_GM} shows that convergence is empirically established in all the methods through the \textit{fixed-point residual mapping} norm $\lVert\mathcal{G}(p^k)\rVert^2$. In particular, for methods that involve sensitivity estimation (i.e., $M_2-M_4$), we observe a higher fluctuation for the first $N/5$ time instances due to the effect of the dither signal $\sigma_{\text{pe}}$. Our proposed method offers slightly inferior performances compared to the others, which is unsurprising as it is completely data-driven. 

In Figure \ref{fig:sensitivityerror}, we illustrate the relative sensitivity estimation error. 
It is expected that method $M_2$ performs better than $M_3,M_4$ as the opinions are known in real time. However, the methods that involve steady-state opinion estimation, $M_3,M_4$, still show comparable performance. This is due to accurate opinion estimation, as can be seen from Figure \ref{fig:opinionestimates}, where we show the opinion estimates for four randomly chosen users, along with their true values. Finally, Figure \ref{fig:gammainfluence}, shows the effect of $\gamma$ in \eqref{eq:obj_rec} on the trade-off between engagement maximization and polarization mitigation. As expected, increasing $\gamma$, turns into a polarization cost reduction. However, the loss in engagement is not significant. This is due to the fact that roughly half of the agents follow clicking behaviour $C_B$, and hence still get engaged by the non-extreme content that promotes polarization mitigation. A more significant engagement loss would be expected if all the users were to be engaged by extreme content only \cite{rossi}.

\begin{figure}
    \centering
    \includegraphics[scale=0.5]{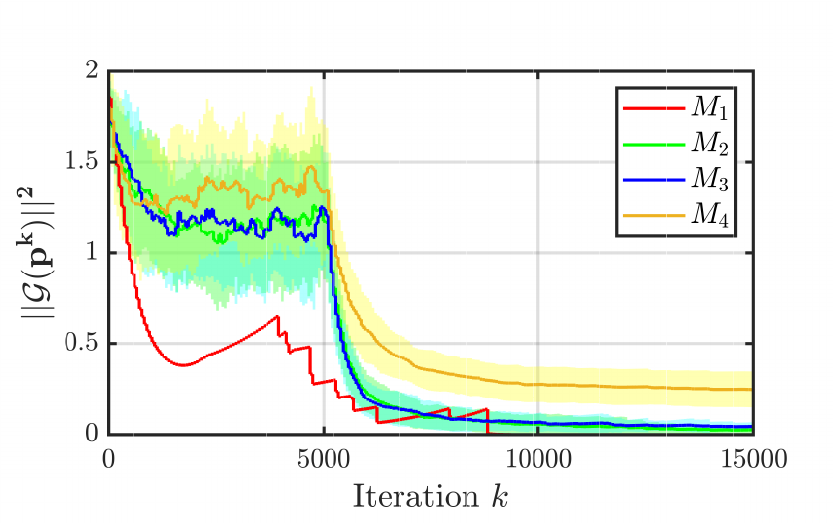}
    \caption{Evolution of the fixed-point residuals $\lVert\mathcal{G}(p^k)\rVert^2$ for the algorithms in Table \ref{methodscomparisontable}. The bold lines represent the mean and the shaded region are the $\pm 1$ standard deviation across the $50$ Monte-Carlo trials.}
    \label{fig:study1_GM}
\end{figure}
\begin{figure}
    \centering
    \includegraphics[scale=0.5]{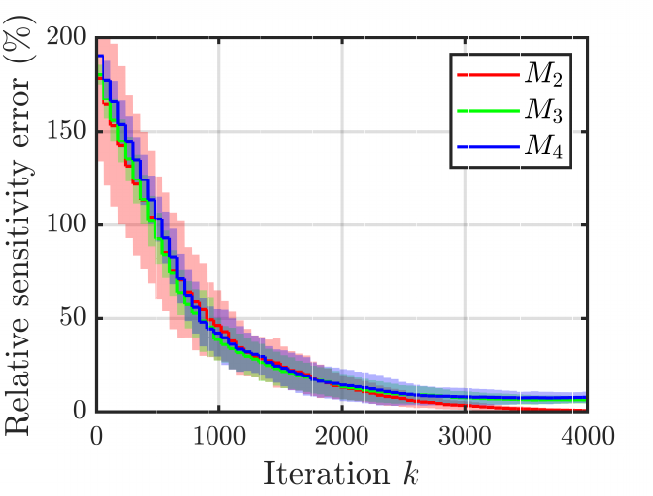}
    \caption{Evolution of the sensitivity estimation error in methods $M_2-M_4$. The bold lines represent the mean and the shaded region are the $\pm 1$ standard deviation across the $50$ Monte-Carlo trials.}
    \label{fig:sensitivityerror}
\end{figure}
\begin{figure}
    \centering
    \includegraphics[scale=0.5]{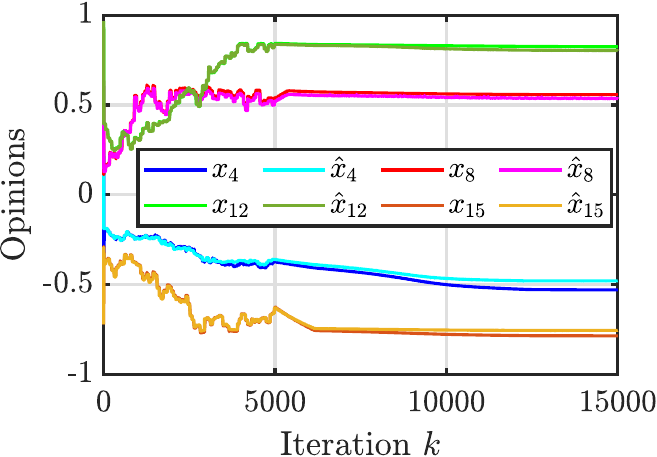}
    \caption{Illustration of opinion estimation in method $M_4$.}
    \label{fig:opinionestimates}
\end{figure}
\begin{figure}
    \centering
    \includegraphics[scale=0.5]{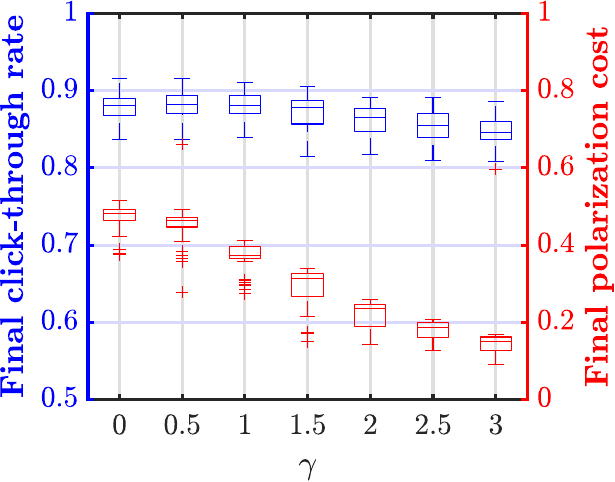}
    \caption{Illustration of the influence of $\gamma$ in \eqref{eq:obj_rec} on final engagement and polarization costs.}
    \label{fig:gammainfluence}
\end{figure}

\begin{figure}
    \subfloat[][]{\includegraphics[scale=0.42]{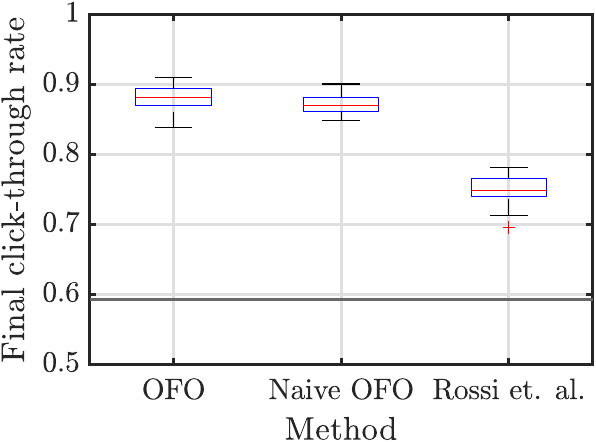}} \ 
    \subfloat[][]{\includegraphics[scale=0.4]{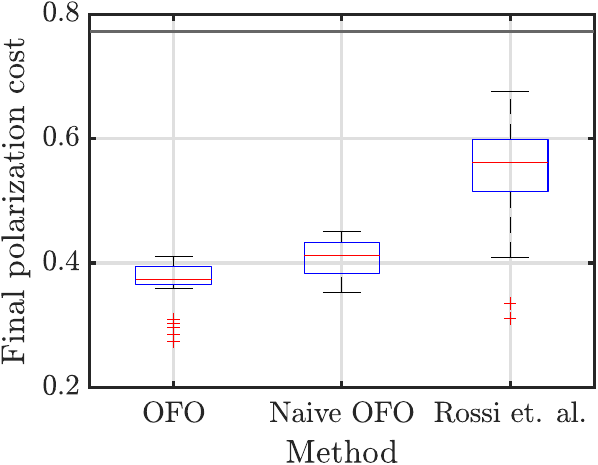}}
    \caption{Steady-state mean CTR (a) and polarization (b) obtained by Algorithm 1 (OFO), its network-agnostic version (naive OFO), and the recommender design in \cite{rossi}. The black lines represent the initial ideal mean CTR (left) and the initial polarization (right).}
    \label{fig:study2_comparison_1}
\end{figure}
\begin{figure}[h]
    \centering
    \includegraphics[scale=0.6]{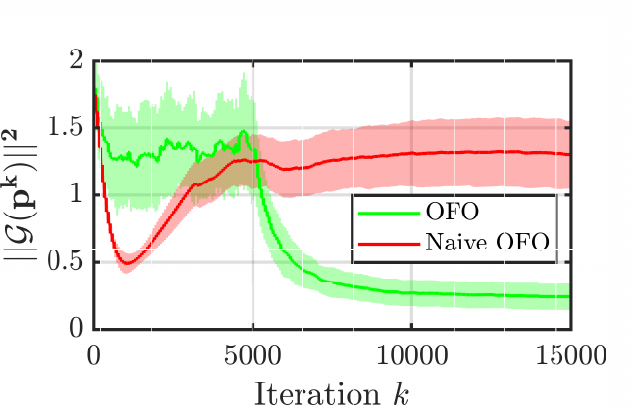}
    \caption{Evolution of the \textit{fixed-point residuals} $\lVert\mathcal{G}(p^k)\rVert^2$ obtained by applying Algorithm 1 and its network-agnostic version. The solid lines represent the mean and the shaded region ($\pm 1$ standard deviation)  the range of changes across the $50$ Monte Carlo trials.}
    \label{fig:study2_GM}
\end{figure}

\subsection{Impact of Network-aware Recommendations}\label{study2}
In this case study, we showcase the impact of providing network-aware recommendations compared to individual, decoupled recommendations. To achieve this, we compare our approach with two network-agnostic methods: (i) the recommender system described in \cite{rossi}, and (ii) a naive version of our OFO-based recommender design. 

The design in \cite{rossi} provides recommendations with exclusively extreme positions (i.e., \( p_i = \pm 1 \))  to each user independently. The position is calculated based on the user's individual clicking history. Specifically, if a user more frequently clicks on \( +1 \) (or \( -1 \)) positions, the system will provide \( +1 \) (or \( -1 \)) positions with high probability. Conversely, if the user shows an equal likelihood of clicking on both \( +1 \) and \( -1 \) positions, the recommendations will be random. It's important to note that the comparison with the method in \cite{rossi} is not entirely equitable in terms of methodology: their approach relies solely on extreme positions and does not target polarization reduction as an objective. However, we include this comparison because, to the best of our knowledge, it is the only model-free, opinion agnostic, systems-theory-based recommender design available.

The second method considered for comparison is a network-agnostic version of our recommender system design summarized in Algorithm \ref{alg:recommender}. We will refer to this method as naive OFO. Specifically, for each user $i\in[n]$, we use a local ANN for opinion estimation that is trained using local positions and acceptance ratio, i.e., $\hat{x}_i = \hat{\beta}_i(y_i,p_i)$ rather than $\hat{\beta}_i(y_i,p)$. Further, we do not carry out sensitivity estimation and instead use a constant diagonal sensitivity $\hat{H} \in \mathcal{D}_{(0,0.5]}^n$ in \eqref{compositegradientestimate}. The rationale behind this design choice is that if the network did not contribute to engagement maximization and/or polarization mitigation, a diagonal \( \hat{H} \) would yield the same performance as our network-aware approach.

 We compare the performance of our recommender system with the other decoupled methods in terms of final mean CTR $J:= \sum_{k=N-T+1}^N \lVert c^k\rVert_1/(nT)$ and final polarization cost $\varphi^{\rm pol}(x^N)$ in Figure \ref{fig:study2_comparison_1} and the fixed point residual mapping in Figure \ref{fig:study2_GM}. In Figure \ref{fig:study2_comparison_1}, we observe that our proposed algorithm achieves a slightly higher ($\sim 1.3\%$) user engagement compared to its naive implementation and results in a lower polarization cost ($\sim 9.3\%$). Thus, network-aware recommendations effectively reduce polarization while maintaining high user engagement. This performance improvement is further illustrated in Figure \ref{fig:study2_GM}, which depicts the temporal evolution of the fixed-point residuals. This result is unsurprising since polarization is a global network metric, whereas engagement maximization can be addressed individually by tailoring recommended content for each user.

Note that the recommender design in \cite{rossi} employs only extreme recommendations, naturally leading to higher polarization costs, as shown in Figure \ref{fig:study2_comparison_1}(b). Furthermore, it results in lower user engagement compared to both our algorithm and the naive method. This is because some users in the network are assigned clicking behavior \(C_B\), for which extreme recommendations do not necessarily increase engagement.



\section{Conclusion}\label{sec:conclusions}

We introduced a novel recommender system algorithm based on online feedback optimization that balances the trade-off between maximizing user engagement (CTR) and minimizing polarization. By integrating auxiliary estimation blocks, our approach overcomes the limitations of existing OFO designs, which require unavailable real-time data, and instead relies solely on click data. We established theoretical guarantees for closed-loop stability and validated our design through numerical simulations using an extended FJ model. Our results demonstrated the impact of network-awareness in reducing polarization while maintaining high engagement.


Future compelling research directions include relaxing the smoothness assumptions for clicking behavior and exploring alternative interest drivers, such as repulsion, beyond confirmation bias. Parameter estimation for clicking behavior could also be refined by using a grey-box model instead of a purely black-box approach. Additionally, investigating the impact of different graph topologies on polarization and developing distributed algorithms for more efficient sensitivity estimation in large-scale networks would be valuable. Lastly, we hope our network-centered approach contributes to the literature on mitigating opinion polarization through algorithmic systems.


\section*{Appendix}
\subsection{Training Data} \label{trainingdata}
The training of the neural network is carried out offline via feed-forward and back-propagation \cite{ANNpaper}.
Algorithm  \ref{alg:training}, provides the pseudo-code to acquire the training data.  

\begin{algorithm}
\caption{[$\mathcal{X}_{p},\mathcal{X}_{x},\mathcal{X}_{y}$] = $\textrm{Training}[N,T,m]$} \label{alg:training}
    \begin{algorithmic}
        \FOR{$j=1$ \textit{to} $m$}
            \STATE $\overline{p} \sim \mathcal{U}[-1,1]^n$
            \STATE $x^{k+1} = f(x^k,\overline{p},d^k)$, $k\in \{0,1,\ldots,N-1\},$
            \STATE $c_i^k = \mathcal{B}(g_i(\overline{p}_i,x_i^k))$, $\forall i\in [n], k\in \{0,1,\ldots,N-1\}$
            \STATE Obtain rating from users $x^N$
            \STATE CTR $y^N = \frac{1}{T+1}\sum_{k=N-T}^{N}c^k$
            \STATE Positions set $\mathcal{X}_{p}\leftarrow \textrm{append}[\overline{p}]$
            \STATE Steady-state opinions set $\mathcal{X}_x\leftarrow\textrm{append}[x^N]$
            \STATE CTR set $\mathcal{X}_{y} \leftarrow\textrm{append}[y^N]$
        \ENDFOR
    \end{algorithmic}
\end{algorithm}

 For each training sample  $k\in[m]$, we provide news expressing a fixed random position $\overline{p}$ to the users for a period of $N$ time steps. The term $T$ is a design parameter, with $(N-T)$ representing a time instant at which opinions have reached a steady-state. After $N$ steps, we obtain the users' rating $x^{N}$ and compute their CTR for the position $\overline{p}$ based on the last $T$ time steps. This process is carried out over $m$ Monte Carlo trials, thus collecting $m$ training data points. We emphasize that, similarly to what happens in real life with recommender systems asking the users to rate their experiences, for each trial, we only require access to one sample from the user opinion dynamics, $x^N$. The real time opinions in each trial $\{x^k\}_{k=0}^{N-1}$ are not needed. 

\subsection{Proof of Lemma \ref{est_opinion}}\label{est_opinionproof}
The proof follows from \cite[Corollary 5.2]{tabuada} with the continuous function $\beta$ playing the same role as $f$ in \cite{tabuada}.

In order to provide an explicit upper-bound for $e_x$ we point out that the neural network makes use of three layers: the injection layer, whose map is represented by $u: \mathbb{R}^{n+1}\to\mathbb{R}^{n+2}$, the intermediate layer whose map is represented by $z:\mathbb{R}^{n+2}\to\mathbb{R}^{n+2}$ and the output layer, whose map is represented by $v: \mathbb{R}^{n+2}\to\mathbb{R}$, with $v(z) = v^\top z + v_0$. Given that the input and output layers are linear maps, we can directly make use of \cite[Theorem 7]{marchi1} to state the following upper-bound $\forall i\in[n]$:
\begin{align*}
    |\beta_i(y_i,p) - \hat{\beta}_i(y_i,p)| \leq & 3\underset{y_i\in\mathcal{X}_{y,i},p\in\mathcal{X}_p}{\mathrm{sup}}|\beta_i(y_i,p) - \hat{\beta}_i(y_i,p)| + \\
    & 2\omega_{\beta_i}(\gamma_x) + |v_{0,i}|\gamma_x,
\end{align*}
 where $\underset{y\in\mathcal{X}_y,p\in\mathcal{X}_p}{\mathrm{sup}}|\beta_i(y,p) - \hat{\beta}_i(y,p)|$ refers to the maximum training error on each user $i$, $\omega_{\beta_i}(\gamma_x)$ refers to the minimum modulus of continuity of $\beta_i$ on the training set, $v_{0,i}$ represents the estimated bias weight $v_0$ from the intermediate layer to the output layer of the neural network for user $i$. Including the modelling error $\theta_x$ of $\beta$ from Assumption \ref{ANNassumption}, it is now possible to state the following:
\begin{align*}
    \lVert e_x\rVert \leq \sqrt{n}\Big[&3\underset{y\in\mathcal{X}_y,p\in\mathcal{X}_p}{\mathrm{sup}}\lVert\beta(y,p) - \hat{\beta}(y,p)\rVert_\infty +  \\
    &2\underset{i\in[n]}{\mathrm{sup}}\omega_{\beta_i}(\gamma_x) + \gamma_x\underset{i\in[n]}{\mathrm{sup}}|v_{0,i}|\Big]+\theta_x.
\end{align*}

\begin{arxiv}
\subsection{Proof of Lemma \ref{est_clickingbehaviour}}\label{est_clickingbehaviourproof}
Before making use of \cite[Corollary 5.2]{tabuada} to prove the upper-bound on $e_y$, it is to be noted that the arguments of $g$ and $\hat{g}$, in the definition of $e_y$, are different. Thus, we re-write $e_y$ as $ e_y= \hat{g}(p,\hat{x}) - g(p,\hat{x}) + g(p,\hat{x}) - g(p,h(p,d))$ and note the following:

\footnotesize
\begin{align*}
    \lVert e_y \rVert & \underset{(a)}{\leq} \lVert\hat{g}(p,\hat{x}) - {g}(p,\hat{x})\rVert + \lVert g(p,h(p,d) + e_x) - g(p,h(p,d)) \rVert \\
    & \underset{(b)}{\leq} \lVert\hat{g}(p,\hat{x}) - {g}(p,\hat{x})\rVert + \lVert\nabla_x g^\top(p,h(p,d)) e_x\rVert + \alpha_y\\
    & \underset{(c)}{\leq} \lVert\hat{g}(p,\hat{x}) - {g}(p,\hat{x})\rVert + M_x\epsilon_x + \alpha_y.
\end{align*}
\normalsize
  In $(a)$, we make use of the identity $\hat{x} = h(p,d) + e_x$, with the state estimation error $e_x$ defined in Lemma \ref{est_opinion}. In $(b)$, we use the Taylor series expansion on $g(p,h(p,d)+e_x)$ around $x$. In addition, the higher-order terms in the expansion of $g(p,h(p,d)+e_x)$ are upper-bounded by the modelling error on the clicking behaviour, i.e. $\rVert\mathcal{O}(g(p,x))\lVert \leq \alpha_{y}$. In $(c)$, we use Assumption \ref{measurementmodelassumpt}, i.e. $g(p,x)$ is $M_x$-Lipshitz with respect to $x$ and the fact that the opinion estimation error is upper-bounded with $\lVert e_x \rVert \leq \epsilon_x$ from Lemma \ref{est_opinion}.
  
  The existence of an upper-bound on $\lVert\hat{g}(p,\hat{x}) - {g}(p,\hat{x})\rVert$ is similar to the one in Lemma \ref{est_opinion}. Thus, we directly make use of \cite[Theorem 7]{marchi1} to state the following upper-bound $\forall i\in[n]$:

  \footnotesize
  \begin{align*}
    |g_i(p_i,\hat{x}_i) - \hat{g}_i(p_i,\hat{x}_i)| \leq & 3\underset{p_i\in\mathcal{X}_{p,i},x_i\in\mathcal{X}_{x,i}}{\mathrm{sup}}|g_i(p_i,x_i) - \hat{g}_i(p_i,x_i)| + \nonumber\\ & 2\omega_{g_i}(\gamma_y) + |w_{0,i}|\gamma_y,
\end{align*}
\normalsize
where $\underset{p_i\in\mathcal{X}_{p,i},x_i\in\mathcal{X}_{x,i}}{\mathrm{sup}}|g_i(p_i,x_i) - \hat{g}_i(p_i,x_i)|$ refers to the maximum training error on each user $i$, $\omega_{g_i}(\gamma_y)$ refers to the modulus of continuity of $g_i$ on $\mathcal{Y}$, $w_{0,i}$ represents the estimated bias weight $w_0$ from the penultimate layer to the output layer of the neural network for user $i$. It is now possible to state the following:
\begin{align*}
    \lVert e_y\rVert \leq \sqrt{n}\Big[&3\underset{p\in\mathcal{X}_p,x\in\mathcal{X}_x}{\mathrm{sup}}\lVert g(p,x) - \hat{g}(p,x)\rVert_\infty +  \nonumber\\
    &2\underset{i\in[n]}{\mathrm{sup}}\omega_{g_i}(\gamma_y) +\gamma_y\underset{i\in[n]}{\mathrm{sup}}|w_{0,i}|\Big]+M_x\epsilon_x + \alpha_y
\end{align*}


\end{arxiv}

\subsection{Hyper-parameter Tuning}\label{HPtuning}
In order to tune the process noise covariance $Q^k$ we rely on heuristics and qualitative observations.
 Empirical results on the FJ model suggests that the sensitivity matrix is non-negative. This leads to the condition  $\ell^k + w^k\geq 0_{n^2}, \forall k\in\mathbb{N}_0$ in the random walk \eqref{processmodelKF}. Therefore, to ensure a non-negative sensitivity, it is required to have the condition $\lVert w^k\rVert_\infty \leq \mathrm{min}\{\ell^k\}$. This condition is satisfied by choosing an $M$ large enough such that
\begin{equation}\label{sigmaq}
    \sigma_q^k = \underset{i:\hat{\ell}_i^k\neq0}{\mathrm{min}}\hat{\ell}_i^k/M.
\end{equation}
Here, $M$ is a tuning parameter, to be selected ad hoc. Note in fact, that setting $M$ arbitrarily large, the bound $\lVert w^k\rVert_\infty \leq \mathrm{min}\{\ell^k\}$ becomes looser, but this gives more trust in the process model, resulting in poor learning performances in the posterior estimates.

The measurement noise $v$, is tuned by maximizing the measurement likelihood as in \cite[Section 3D]{Abbeel2005DiscriminativeTO}. To this end, we solve the following optimization problem online:\
\begin{align}
   & (\sigma_r^k)^2 = \underset{(\sigma_r)^2}{\mathrm{arg}} {\mathrm{max}} \ P\Big(\Delta {x}_{\rm ss}^{k+1,\tau_i+1} \ | \ \Delta \Tilde{p}^{k,\tau_i} \Big) \label{MLE} 
\end{align}
The intuition behind \eqref{MLE} is to find the noise characteristics of $v$ that best represent the measurement model in \eqref{measurementmodelKF} using the input $\Delta p$ and output $\Delta {x}$ information. Note that, since the true measurements $\Delta x$ are unavailable, they are replaced with~$\Delta\hat{x}$.

\subsection{Proof of Lemma \ref{closeness}} \label{closenessproof}
    Due to Assumption \ref{measurementmodelassumpt} and Lemma \ref{Lsmoothlemma}, for any $p,x_1,x_2\in\mathbb{R}^n$ it holds that
\footnotesize
\begin{equation*}
    \varphi^{\mathrm{ctr}}(p,x_1) - \varphi^{\mathrm{ctr}}(p,x_2) - (\nabla_x \varphi^{\mathrm{ctr}})^\top (p,x_2) (x_1-x_2) \leq \frac{1}{2}L_x\lVert x_1-x_2\rVert^2.
\end{equation*}
\normalsize
In the above equation, we replace $x_1,x_2$ with $x+\mu e_i$ and $x$, respectively, so that
\begin{equation} \label{closeness_1}
    \varphi^{\mathrm{ctr}}(p,x + \mu e_i) - \varphi^{\mathrm{ctr}}(p,x) - \mu(\nabla_x \varphi^{\mathrm{ctr}})^\top (p,x)e_i \leq \frac{1}{2}L_x\mu^2.
\end{equation}
\normalsize
We now add and subtract $\hat{\varphi}^{\mathrm{ctr}}(p,\hat{x} + \mu e_i)$ and $\hat{\varphi}^{\mathrm{ctr}}(p,\hat{x})$ in the above equation and we note that $\hat{\varphi}^{\mathrm{ctr}}(p,\hat{x}) := -1_n^\top \hat{g}(p,\hat{x})$ and $\varphi^{\mathrm{ctr}}(p,x) := -1_n^\top g(p,x)$. Using these definitions in \eqref{closeness_1}, we obtain:
\begin{align}
    \hat{\varphi}^{\mathrm{ctr}}(p,\hat{x} + \mu e_i)& - \hat{\varphi}^{\mathrm{ctr}}(p,\hat{x}) \leq 1_n^\top[e_y(p,x) + e_y(p,x+\mu e_i)] \nonumber\\
    & +\mu(\nabla_x \varphi^{\mathrm{ctr}}(p,x))^\top e_i + \frac{1}{2}L_x\mu^2, \label{closeness_2}
\end{align}
 \normalsize where $e_y(p,x)$ is the clicking behaviour estimation error as defined in \eqref{upperbound_ANN_c}. Dividing both sides by $\mu$ in \eqref{closeness_2} and using the gradient estimate definition in \eqref{gradapprox_x}, we obtain 
\begin{align*}
\Big(\nabla_x\hat{\varphi}^{\mathrm{ctr}}\Big)_i(p,\hat{x})& - \Big(\nabla_x{\varphi}^{\mathrm{ctr}}\Big)_i(p,x) \leq  \frac{L_x\mu}{2} + \nonumber\\
    &\frac{1}{\mu}1_n^\top(e_y(p,x) + e_y(p,x+\mu e_i)).\label{closeness_3}
\end{align*}
\normalsize for all $ i\in[n]$. Taking the modulus on both sides, we obtain $|1_n^\top(e_y(p,x) + e_y(p,x+\mu e_i))| \leq \sqrt{n}\epsilon_g$ using the Cauchy-Schwartz inequality and the upper-bound on the clicking behaviour estimation error $\lVert e_y \rVert \leq \epsilon_g$ as in \eqref{upperbound_ANN_c}. We now write
\begin{equation*}
\Big|\Big(\nabla_x\hat{\varphi}^{\mathrm{ctr}}\Big)_i(p,\hat{x}) - \Big(\nabla_x {\varphi}^{\mathrm{ctr}}\Big)_i(p,x)\Big| \leq \frac{1}{2}L_x\mu + 2\frac{\sqrt{n}\epsilon_g}{\mu},
\end{equation*}
from which \eqref{gradapprox_x} follows. Analogous reasoning is followed for the gradient estimation error with respect to $p$. To obtain the tight upper-bound, we use first-order optimality conditions with respect to $\mu$ on the term $\frac{1}{2}L_x\mu + 2\frac{\sqrt{n}\epsilon_g}{\mu}$, thus obtaining $\mu^* = 2n^{1/4}\sqrt{\epsilon_g/L_x}$. Since the second-order derivative of the term $\frac{1}{2}L_x\mu + 2\frac{\sqrt{n}\epsilon_g}{\mu}$ is strictly positive, $\mu^*$ is the smoothing parameter that provides the lowest upper-bound on the gradient estimation error.

\subsection{Proof of Theorem \ref{sensitivityconvergence}} \label{sensitivityconvergenceproof}
     The Kalman filter is uniformly asymptotically stable provided the pairs $(I_{n^2},\sqrt{Q^k}), (I_{n^2},\Delta\Tilde{p}^{k,\tau_i})$ are uniformly completely controllable and observable, respectively \cite[{Theorem 7.4}]{andrew}. By design (see Appendix \ref{HPtuning}), the process noise covariance $Q^k$ is positive definite $\forall k\in\mathbb{N}_0$, from which the matrix pair $(I_{n^2},\sqrt{Q^k})$ is uniformly completely controllable.

    To prove observability of  $(I_{n^2},\Delta\Tilde{p}^{k,\tau_i})$, we consider the following observability Gramian matrix \cite[Chapter 7.5]{andrew}:
    \begin{equation}\label{observability}
        W_O(l_1,l_2) := \sum_{l=l_1}^{l_2} (\Delta\Tilde{p}^{\tau_{i-l+1},\tau_{i-l}})^\top (R^{\tau_{i-l+1}})^{-1}(\Delta\Tilde{p}^{\tau_{i-l+1},\tau_{i-l}}).
    \end{equation}
    The matrix pair $(I_{n^2},\Delta\Tilde{p}^{k,\tau_i})$ is said to be uniformly completely observable if $\exists \  \alpha,\beta>0$ and an $N\in \mathbb{N}_0: 0 < \alpha I_{n^2} \leq W_O(k-N,k) \leq \beta I_{n^2}$.
    Since $\Delta \Tilde{p}^{t_1,t_2}:= \Delta p^{t_1,t_2} \otimes I_n$, we write $W_O(l_1,l_2) = \sum_{l=l_1}^{l_2} (\Delta{p}^{\tau_{i-l+1},\tau_{i-l}})(\Delta{p}^{\tau_{i-l+1},\tau_{i-l}})^\top \otimes (R^{\tau_{i-l+1}})^{-1}$. With an appropriate choice of $M$ in \eqref{sigmaq}, we have $\alpha_1 I_n \leq R^k \leq \alpha_2I_n, \forall k\in\mathbb{N}_0$, with $\alpha_1>0$ and $\alpha_2<\infty$, leading to $\frac{1}{\alpha_2}\leq(R^{\tau_{i-l+1}})^{-1}\leq\frac{1}{\alpha_1}$. Moreover, the persistency of excitation assumption (see Assumption \ref{POEassumption}) on the inputs $\Delta p$ enable us to state that $\exists \  S\in\mathbb{N}_0: \beta_1 I_n \leq \sum_{l=k-S}^k(\Delta{p}^{\tau_{i-l+1},\tau_{i-l}})(\Delta{p}^{\tau_{i-l+1},\tau_{i-l}})^\top \leq \beta_2I_n$, with $\beta_1>0$ and $\beta_2 = (S+1)\underset{l\in[k-S,k]}{\mathrm{sup}} \ \lVert\Delta{p}^{\tau_{i-l+1},\tau_{i-l}}\rVert^2\leq 4n(S+1)$. This condition along with the inequality on $R^k$ leads to the result $0<\frac{\beta_1}{\alpha_2}I_{n^2}\leq W_O(k-S,k)\leq \frac{\beta_2}{\alpha_1}I_{n^2}$. Thus, the matrix pair $(I_{n^2},\Delta\Tilde{p}^{k,\tau_i})$ is uniformly completely observable and it is now possible to state that the Kalman filter is uniformly asymptotically stable.

    We can now derive an explicit analytical expression for the upper-bound on the bias $\lVert\mathbb{E}[e^k]\rVert$ and the variance $\mathbb{E}[\lVert e^k \rVert^2]$.  By making use of \eqref{processmodelKF} and \eqref{KFposterior_state} on $\ell^k$ and $\hat{\ell}^k$, by adding and subtracting $\Delta x_{\rm ss}^{k+1,\tau_i+1}$and by making use of \eqref{processmodelKF} and \eqref{measurementmodelKF} on $\Delta x_{\rm ss}^{k+1,\tau_i+1}$, one gets
     \begin{align}
     e^k &= \big(I_{n^2} - \zeta^kK^{k-1}\Delta\Tilde{p}^{k,\tau_i}\big)(e^{k-1}+w^{k-1}) - \zeta^kK^{k-1}v^k  \nonumber\\
        & + \ \zeta^kK^{k-1}\Delta e_x^{\tau_i+1,k+1}. \label{errorsensitivity_1}
     \end{align}
    
    Considering a trigger event at $k$, i.e. $\zeta^k=1$, we trace \eqref{errorsensitivity_1} back to the previous trigger instant at $\tau_i$, thus obtaining:
    \begin{align}
        e^{k} =& \big(I_{n^2} - K^{k-1}\Delta\Tilde{p}^{k,\tau_i}\big)e^{\tau_i} + \big(I_{n^2} - K^{k-1}\Delta\Tilde{p}^{k,\tau_i}\big)w^{k-1} + \nonumber\\
        & \sum_{t=\tau_i}^{k-2}w^t - K^{k-1}v^k + K^{k-1}\Delta e_x^{\tau_i+1,k+1}. \label{error_sensitivity_2}
    \end{align}
    Taking the expectation on both sides of the above equation and using Assumption \ref{IIDassumption}, we have:
    \begin{equation}\label{bias_1}
        \mathbb{E}[e^k] = \big(I_{n^2} - K^{k-1}\Delta\Tilde{p}^{k,\tau_i}\big)\mathbb{E}[e^{\tau_i}] + K^{k-1}\mathbb{E}[\Delta e_x^{\tau_i+1,k+1}]. 
    \end{equation}
    Since we have already established uniform complete observability, the Kalman filter gain $K^k$ settles to a steady-state value $K_s$ as $k\to\infty$. Thus, we can state that $\exists \ K_m: \lVert K^k\rVert \leq K_m, \forall \  k\in\mathbb{N}^0$. Further, denoting $\psi(\tau_{i+1},\tau_i):= I_{n^2} - K^{\tau_{i+1}-1}\Delta\Tilde{p}^{\tau_{i+1},\tau_i}$ as the state transition matrix for the Kalman filter error $e^k$, uniform asymptotic stability leads to $\lVert \psi(\tau_{i+1},\tau_i) \rVert \leq c_1 \xi^{c_2T}$ for some $c_1,c_2>0, \xi\in (0,1) : c_1 \xi^{c_2T}<1$, where $\tau_{i+1}=k$ is the new trigger time instant and $T$ is the time period. Thus, taking the norm on both sides in \eqref{bias_1}, we have
       $ \lVert \mathbb{E}[e^{\tau_{i+1}}] \rVert \leq c_1\xi^{c_2T}\lVert\mathbb{E}[e^{\tau_i}]\rVert + 2K_m\epsilon_x$, 
    where we note that $\mathrm{sup} \ \lVert\mathbb{E}[\Delta e_x^{\tau_i+1,k+1}]\rVert \leq \mathbb{E}[\lVert \Delta e_x^{\tau_i+1,k+1} \rVert] \leq 2\epsilon_x$ from Lemma \ref{est_opinion}. Hence, we derive the bias equation
    \begin{equation*}
        \lVert \mathbb{E}[e^{\tau_{i+1}}] \rVert \leq \big(c_1\xi^{c_2T}\big)^{|\mathcal{T}|}\lVert\mathbb{E}[e^{0}]\rVert + 2K_m\epsilon_x\frac{1-(c_1\xi^{c_2T})^{|\mathcal{T}|}}{1-c_1\xi^{c_2T}},
    \end{equation*}
    where, $\mathcal{T}:= \{\tau_0,\tau_1,\dots,\tau_i\}$ is the set of all trigger time instances upto $\tau_i$. Thus, we have the asymptotic result on the bias $\lim_{|\mathcal{T}|\to\infty}\mathbb{E}[e^{\tau_{i}}] \leq 2K_m\epsilon_x/(1-c_1\xi^{c_2T})$.

    We now derive the upper-bound on the variance $\mathbb{E}[\lVert e^k \rVert^2]$. Taking the expectation of the  norm squared on both sides in \eqref{errorsensitivity_1}, we obtain:
    \begin{align*}
        \mathbb{E}[\lVert e^k \rVert^2] \underset{(a)}{=} \mathbb{E}[\lVert & (I_{n^2} - \zeta^kK^{k-1}\Delta\Tilde{p}^{k,\tau_i})(e^{k-1} + w^{k-1}) - \nonumber\\
        &\zeta^k K^{k-1}\big(v^k + \Delta e_x^{k+1,\tau_i+1}\big)\rVert^2] \nonumber\\
        \underset{(b)}{\leq} \lVert I_{n^2}& - \zeta^kK^{k-1}\Delta\Tilde{p}^{k,\tau_i} \rVert^2(\mathbb{E}[\lVert e^{k-1} \rVert^2] + (\sigma_q^{k-1})^2) \nonumber\\
        +\lVert&\zeta^k K^{k-1}\rVert^2\big((\sigma_r^k)^2 + 2\epsilon_x^2\big)
    \end{align*}
     In $(b)$, we expand the norm and use Assumption \ref{IIDassumption} to state that the expectation on the cross-coupled terms are all zero. We then use $\mathbb{E}[\lVert w^{k-1}\rVert^2] = (\sigma_q^{k-1})^2$ and $\mathbb{E}[\lVert v^k\rVert^2] = (\sigma_r^k)^2$. Further, Lemma \ref{est_opinion} allows us to state that $\mathbb{E}[\lVert \Delta e_x^{k+1,\tau_i+1} \rVert^2] \leq 2\epsilon_x^2$.
    Considering a trigger at time $k$, we have:
    \begin{align}\label{sensitivityerror_2}
        \mathbb{E}[\lVert e^{\tau_{i+1}} \rVert^2] \leq & \lVert I_{n^2} - K^{\tau_{i+1}-1}\Delta\Tilde{p}^{\tau_{i+1},\tau_i} \rVert^2\mathbb{E}[\lVert e^{\tau_i} \rVert^2] + \nonumber\\
        &(T-1)\overline{\sigma}_q^2 + \lVert K^{\tau_{i+1}-1}\rVert^2\big(\overline{\sigma}_r^2 + 2\epsilon_x^2\big),
    \end{align}
    where $\overline{\sigma}_q^2 = \underset{t\in\mathbb{N}_0}{\mathrm{sup}} ({\sigma}_q^t)^2$ and $\overline{\sigma}_r^2 = \underset{t\in\mathbb{N}_0}{\mathrm{sup}} ({\sigma}_r^t)^2$ with $\sigma_q,\sigma_r$ estimated using \eqref{sigmaq} and \eqref{MLE}, respectively. Tracing back \eqref{sensitivityerror_2} recursively to $k=0$, we obtain the following relation:
    \begin{align*}\label{sensitivityerror_3}
        \mathbb{E}[\lVert e^{\tau_{i+1}} \rVert^2] \leq & (c_1\xi^{c_2T})^{2|\mathcal{T}|}\mathbb{E}[\lVert e^0 \rVert^2] + \nonumber\\
        &\frac{1-(c_1\xi^{c_2T})^{2|\mathcal{T}|}}{1-c_1\xi^{c_2T}}\big[ (T-1)\overline{\sigma}_q^2 + K_m^2\big(\overline{\sigma}_r^2 + 2\epsilon_x^2\big)\big],
    \end{align*}
     We now have the following asymptotic result on the variance:
    \begin{equation*}
        \lim_{|\mathcal{T}|\to\infty}\mathbb{E}[\lVert e^{\tau_i} \rVert^2] \leq \frac{1}{1-c_1\xi^{c_2T}}\big[ (T-1)\overline{\sigma}_q^2 + K_m^2\big(\overline{\sigma}_r^2 + 2\epsilon_x^2\big)\big],
    \end{equation*}



\subsection{Proof of Theorem \ref{OFOconvergence}}
Before describing the proof, we state the following supporting lemma and remark.
\begin{lemma}[Projections with smooth functions \cite{reddi2016fast}]\label{supplementary_1}
    Let $y = \Pi_{[-1,1]^n}[x - \eta u]$ with $y,x,u\in\mathbb{R}^n$. Then, the following inequality holds:
    \begin{align*}
        \varphi(y) \leq & \varphi(z) + \langle y-z,\Phi(x)-u\rangle + \Big[\frac{L'}{2} - \frac{1}{2\eta}\Big]\lVert y-x\rVert^2 + \\
        & \Big[\frac{L'}{2} + \frac{1}{2\eta}\Big]\lVert z-x\rVert^2 - \frac{1}{2\eta}\lVert y-z\rVert^2, \quad \forall  z\in\mathbb{R}^n
    \end{align*}
    where $\varphi$ is the cost function to be minimized and $\Phi$ its gradient. Further, $L'$ and $\eta$ are the smoothness factor of $\varphi$ and the step-size of the gradient descent algorithm, respectively.
\end{lemma}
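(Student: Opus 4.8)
The plan is to combine the descent inequality for $L'$-smooth functions (Lemma~\ref{Lsmoothlemma}) with the variational characterization of the Euclidean projection, and then to rearrange terms; throughout, $\Phi=\nabla\varphi$ and $\varphi$ is $L'$-smooth. First I would apply Lemma~\ref{Lsmoothlemma} to $\varphi$ at the pair $(y,x)$ to obtain the descent bound
\[
\varphi(y)\le\varphi(x)+\langle\Phi(x),y-x\rangle+\tfrac{L'}{2}\lVert y-x\rVert^2 .
\]
The key algebraic move is to split the inner product around the test point $z$ and around the surrogate direction $u$, namely $\langle\Phi(x),y-x\rangle=\langle\Phi(x),z-x\rangle+\langle\Phi(x)-u,y-z\rangle+\langle u,y-z\rangle$, which already exposes the cross term $\langle y-z,\Phi(x)-u\rangle$ appearing in the claim and isolates the two scalars $\langle\Phi(x),z-x\rangle$ and $\langle u,y-z\rangle$ to be bounded separately.

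Second, I would bound $\langle u,y-z\rangle$ through the projection. Writing $y=\Pi_{[-1,1]^n}[x-\eta u]=\arg\min_{w\in[-1,1]^n}\{\langle u,w-x\rangle+\tfrac{1}{2\eta}\lVert w-x\rVert^2\}$ exhibits $y$ as the minimizer of a $\tfrac1\eta$-strongly convex function over the box, whose first-order optimality condition reads $\langle u+\tfrac1\eta(y-x),z-y\rangle\ge0$ for every feasible $z$. Rearranging and applying the three-point identity $\langle y-x,z-y\rangle=\tfrac12\big(\lVert z-x\rVert^2-\lVert y-x\rVert^2-\lVert z-y\rVert^2\big)$ gives
\[
\langle u,y-z\rangle\le\tfrac{1}{2\eta}\big(\lVert z-x\rVert^2-\lVert y-x\rVert^2-\lVert z-y\rVert^2\big).
\]

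Third, I would convert $\varphi(x)+\langle\Phi(x),z-x\rangle$ into $\varphi(z)$. Here the nonconvexity of $\varphi$ forces the use of the \emph{two-sided} quadratic estimate valid for $L'$-smooth maps \cite{nesterov}, specifically its lower side $\varphi(x)+\langle\Phi(x),z-x\rangle\le\varphi(z)+\tfrac{L'}{2}\lVert z-x\rVert^2$ (the one-sided bound in Lemma~\ref{Lsmoothlemma} delivers the opposite inequality, so this step cannot rely on it alone). Substituting the bounds from the second and third steps into the descent inequality and collecting the quadratic terms — coefficient $\tfrac{L'}{2}-\tfrac1{2\eta}$ on $\lVert y-x\rVert^2$, coefficient $\tfrac{L'}{2}+\tfrac1{2\eta}$ on $\lVert z-x\rVert^2$, and $-\tfrac1{2\eta}$ on $\lVert z-y\rVert^2$ — reproduces the stated inequality verbatim, using $\langle\Phi(x)-u,y-z\rangle=\langle y-z,\Phi(x)-u\rangle$.

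The main obstacle is conceptual rather than computational: recognizing that the projection must be read as a strongly convex proximal minimizer, so that its optimality condition yields the quadratic (not merely linear) decrease captured by $-\tfrac1{2\eta}\lVert z-y\rVert^2$, and that the descent step requires the \emph{lower} side of the two-sided smoothness estimate, which Lemma~\ref{Lsmoothlemma} does not supply in the absence of convexity. A secondary technical point is that the optimality inequality in the second step holds only for feasible $z\in[-1,1]^n$; the statement should therefore be understood over the constraint set, which suffices because every $z$ invoked in Theorem~\ref{OFOconvergence} (e.g.\ $p^k$ and local minimizers $p^*$) is feasible. The remaining work is routine term bookkeeping.
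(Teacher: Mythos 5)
Your proof is correct and matches the standard argument behind this lemma: the paper itself gives no proof, deferring entirely to \cite[Lemma 2]{reddi2016fast}, and your combination of the descent inequality at $(y,x)$, the reading of the projection as a $\tfrac{1}{\eta}$-strongly convex proximal minimizer (first-order optimality plus the three-point identity), and the lower side of the two-sided smoothness estimate is exactly that reference's reasoning. You also correctly flag the one genuine subtlety: the inequality as stated ``for all $z\in\mathbb{R}^n$'' in fact requires $z\in[-1,1]^n$ (in \cite{reddi2016fast} the constraint set's indicator is folded into the objective, so infeasible $z$ make the bound vacuous), and this suffices here since every $z$ invoked in the proof of Theorem \ref{OFOconvergence} is feasible.
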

\begin{proof}\label{supplementary_1_proof}
    The proof is given in \cite[Lemma 2]{reddi2016fast}.
\end{proof}\label{OFOconvergenceproof}
\begin{remark}[Composite gradient lipschitzness]\label{Lsmooth_full}
    We observe that the Lipschitz and smoothness constant for $\varphi^{\mathrm{pol}}(x)$ are $2\sqrt{n}$ and $2$, respectively. Using Assumptions \ref{ass:ODasm}(iii), \ref{measurementmodelassumpt}, the composite gradient $\Phi(p)$ is thus $L'-$Lipschitz with respect to $p$, where $L'=L_p + L^2(L_x + 2)$. Therefore, the composite cost function $\varphi(p,h(p,d))$ is $L'-$smooth with respect to $p$. 
\end{remark}

We now state the following gradient update in the case where the gradients, sensitivity and opinions are known:
\begin{equation*}
    \overline{p}^{k+1} = \Pi_{[-1,1]^n}[p^k - \eta\Phi(p^k,h(p^k,d))],
\end{equation*}
where $\Phi(p,h(p,d))$ is the composite gradient.
The above will serve as a benchmark to investigate the stationarity of our algorithm. We are now in the position to prove the inequalities in \eqref{OFOconvergenceequation}. 
To do so, we use Lemma \ref{supplementary_1} with $y = \overline{p}^{k+1}$, $x=p^k$ and $u =\Phi(p^k,h(p^k,d))$ is the composite gradient. Choosing $z = p^k$ and taking the expectation on both sides, we obtain:
\begin{equation}\label{OFO_convergence_step1}
        \mathbb{E}\Big[\varphi(\overline{p}^{k+1})\Big] \leq \mathbb{E}\Big[\varphi(p^k) + \Big(\frac{L'}{2} - \frac{1}{\eta}\Big)\lVert\overline{p}^{k+1}-p^k\rVert^2\Big].
\end{equation}
We now define the update step with our algorithm:
\begin{equation*}
        {p}^{k+1} = \Pi_{[-1,1]^n}[p^k - \eta\zeta^k\hat{\Phi}^k],
\end{equation*}
where, the composite gradient estimate $\hat{\Phi}^k$ is given in \eqref{compositegradientestimate}. We use Lemma \ref{supplementary_1} with $y = p^{k+1}$, $x=p^k$ and $u = \zeta^k\hat{\Phi}^k$. Choosing $z = \overline{p}^{k+1}$ and taking the expectation on both sides of the inequality, we obtain:
\footnotesize
\begin{align}
        \mathbb{E}\Big[\varphi(p^{k+1})\Big] \leq  \mathbb{E}\Big[ & \varphi(\overline{p}^{k+1}) + \Big(\frac{L'}{2} - \frac{1}{2\eta}\Big)\lVert p^{k+1}-p^k\rVert^2 + \nonumber\\
        & \langle p^{k+1}-\overline{p}^{k+1},\Phi(p^k,h(p^k,d)) - \zeta^k\hat{\Phi}^k\rangle + \label{OFO_convergence_step2}\\
        & \Big(\frac{L'}{2} + \frac{1}{2\eta}\Big)\lVert\overline{p}^{k+1}-p^k\rVert^2 -\frac{1}{2\eta}\lVert p^{k+1}-\overline{p}^{k+1}\rVert^2\Big] \notag
\end{align}
\normalsize We now add inequalities \eqref{OFO_convergence_step1} and \eqref{OFO_convergence_step2} to obtain:
\footnotesize
\begin{align}
        \mathbb{E}\Big[\varphi(p^{k+1})\Big] \leq \mathbb{E}\Big[ & \varphi(p^k) + \Big(\frac{L'}{2} - \frac{1}{2\eta}\Big)\lVert p^{k+1}-p^k\rVert^2 + \nonumber\\
        & \underset{T_1}{\underbrace{\langle p^{k+1}-\overline{p}^{k+1},\Phi(p^k,h(p^k,d)) - \zeta^k\hat{\Phi}^k\rangle}} +  \label{OFO_convergence_step3}\\
        & \Big({L'} - \frac{1}{2\eta}\Big)\lVert\overline{p}^{k+1}-p^k\rVert^2 -\frac{1}{2\eta}\lVert p^{k+1}-\overline{p}^{k+1}\rVert^2\Big].\notag
\end{align}

\normalsize
We now focus on the term $T_1$. Using Cauchy-Schwartz relation and the fact that the geometric mean of two non-negative real numbers is always less than its arithmetic mean, we obtain the following:
$ T_1 \leq \lVert p^{k+1}-\overline{p}^{k+1}\rVert \ \lVert\Phi(p^k,h(p^k,d)) - \zeta^k\hat{\Phi}^k\rVert
\leq \frac{1}{2\eta}\lVert p^{k+1}-\overline{p}^{k+1}\rVert^2 + \frac{\eta}{2}\lVert\Phi(p^k,h(p^k,d)) - \zeta^k\hat{\Phi}^k\rVert^2.$
 Using the above inequality in \eqref{OFO_convergence_step3}, we obtain:

 \footnotesize
    \begin{align}
    &\mathbb{E}\Big[\varphi(p^{k+1})\Big] \leq \mathbb{E}\Big[  \varphi(p^k) + \underset{T_2}{\underbrace{\Big(\frac{L'}{2} - \frac{1}{2\eta}\Big)\lVert p^{k+1}-p^k\rVert^2}} + \label{OFO_convergence_step4}\\
    & \frac{\eta}{2}{{\lVert\Phi(p^k,h(p^k,d)) - \zeta^k\hat{\Phi}^k\rVert^2}} + \Big({L'} - \frac{1}{2\eta}\Big)\eta^2\lVert\mathcal{G}(p^k)\rVert^2 \Big].\notag
    \end{align}
    \normalsize In the above inequality, we used the definition of fixed-point residual mapping according to \eqref{gradientmapping}.
    
    We now assume that there is a trigger at time instant $k$, i.e. $\zeta^k =1$. To obtain a feasible upper-bound on $\mathbb{E}[\lVert\mathcal{G}(p^k)\rVert^2]$, we need $L' - 1/2\eta<0$. Thus, the step-size is constrained with $\eta \in (0,\frac{1}{2L'})$. With this constraint, we have $T_2\leq 0$. This leads to the following:
    \begin{align}
        \mathbb{E}\Big[\lVert\mathcal{G}(p^k)\rVert^2\Big] &\leq \frac{2}{\eta(1-2\eta L')}\Big\{ \mathbb{E}\Big[\varphi(p^k)\Big] - \mathbb{E}\Big[\varphi(p^{k+1})\Big] + \nonumber\\
        & \frac{\eta}{2}\underset{T_3}{\underbrace{\mathbb{E}\Big[\lVert\Phi(p^k,h(p^k,d)) - \hat{\Phi}^k\rVert^2\Big]}}\Big\}. \label{OFO_convergence_step5}
    \end{align}
\normalsize We now analyze the term $T_3$. For the sake of convenience, we drop the arguments of the gradient $p,x$ and time argument $k$ in the gradient terms. Thus, we denote $\nabla_p\varphi^{\rm ctr}=\nabla_p\varphi^{\rm ctr}(p^k,h(p^k,d))$, $\nabla_x\varphi^{\rm ctr}=\nabla_x\varphi^{\rm ctr}(p^k,h(p^k,d))$,
    $\nabla_p\hat{\varphi}^{\rm ctr}=\nabla_p\hat{\varphi}^{\rm ctr}(p^k,\hat{x}^{k+1})$ and $\nabla_x\hat{\varphi}^{\rm ctr}=\nabla_x\hat{\varphi}^{\rm ctr}(p^k,\hat{x}^{k+1})$. Using the definitions of $\Phi(p^k,h(p^k,d)), \hat{\Phi}^k$, we have:

    \footnotesize
    \begin{align}
        T_3 = \mathbb{E}\Big[ & \lVert\nabla_p\varphi^{\rm ctr}-\nabla_p\hat{\varphi}^{\rm ctr} + H^{k\top}\nabla_x\varphi^{\rm ctr} - \hat{H}^{k\top}\nabla_x\hat{\varphi}^{\rm ctr} + \label{T3eqn1} \\
        & \gamma H^{k\top}\nabla_x\varphi^{\rm pol}(h(p^k,d))
         -\gamma \hat{H}^{k\top}\nabla_x\varphi^{\rm pol}(\hat{x}^{k+1}) + \frac{w_{\rm pe}^k}{\eta}\rVert^2\Big],\notag 
    \end{align}
    \normalsize where $H^k=\nabla_p h(p^k,d)$ is the true sensitivity and $\hat{H}^k$ is its estimate at time $k$. We now analyze the upper-bound on $T_3$:

    \footnotesize
    \begin{align}
        T_3 \overset{(a)}{\leq} 6 \Big\{&\lVert\nabla_p\varphi^{\rm ctr}-\nabla_p\hat{\varphi}^{\rm ctr}\rVert^2 + \lVert(H^k)^\top(\nabla_x\varphi^{\rm ctr}-\nabla_x\hat{\varphi}^{\rm ctr})\rVert^2 + \nonumber\\
        & \frac{\sigma_{\rm pe}^2}{\eta^2} + \gamma^2\lVert(H^k)^\top(\nabla_x\varphi^{\rm pol}(h(p^k,d))-\nabla_x{\varphi}^{\rm pol}(\hat{x}^{k+1}))\rVert^2 + \nonumber\\
        & \big(\gamma^2\lVert\nabla_x\varphi^{\rm pol}(\hat{x}^{k+1})\rVert^2 + \lVert\nabla_x\hat{\varphi}^{\rm ctr}\rVert^2\big)\mathbb{E}\Big[\lVert H^k-\hat{H}^k\rVert^2\Big]\Big\} \nonumber\\
        \overset{(b)}{\leq} 6 \Big\{& 4n^{3/2}\epsilon_g (L_p + L^2L_x) + \frac{\sigma_{\rm pe}^2}{\eta^2} + \nonumber\\
        & \gamma^2\lVert(H^k)^\top(\nabla_x\varphi^{\rm pol}(h(p^k,d))-\nabla_x{\varphi}^{\rm pol}(\hat{x}^{k+1}))\rVert^2 + \nonumber\\
        & \big(\gamma^2\lVert\nabla_x\varphi^{\rm pol}(\hat{x}^{k+1})\rVert^2 + \lVert\nabla_x\hat{\varphi}^{\rm ctr}\rVert^2\big)\mathbb{E}\Big[\lVert H^k-\hat{H}^k\rVert^2\Big]\Big\} \nonumber\\
        \overset{(c)}{\leq} 6 \Big\{& 4n^{3/2}\epsilon_g (L_p + L^2L_x) + \frac{\sigma_{\rm pe}^2}{\eta^2} + 4\gamma^2L^2\lVert h(p^k,d) -\hat{x}^{k+1}\rVert^2 + \nonumber\\
        & \big(\gamma^2\lVert\nabla_x\varphi^{\rm pol}(\hat{x}^{k+1})\rVert^2 + \lVert\nabla_x\hat{\varphi}^{\rm ctr}\rVert^2\big)\mathbb{E}\Big[\lVert H^k-\hat{H}^k\rVert^2\Big]\Big\} \nonumber\\
        \overset{(d)}{\leq} 6 \Big\{& 4n^{3/2}\epsilon_g (L_p + L^2L_x) + \frac{\sigma_{\rm pe}^2}{\eta^2} + 4\gamma^2L^2\epsilon_x^2 + \nonumber\\
        & \big(\gamma^2\lVert\nabla_x\varphi^{\rm pol}(\hat{x}^{k+1})\rVert^2 + \lVert\nabla_x\hat{\varphi}^{\rm ctr}\rVert^2\big)\mathbb{E}\Big[\lVert H^k-\hat{H}^k\rVert^2\Big]\Big\} \nonumber\\
        \overset{(e)}{\leq} 6 \Big\{& 4n^{3/2}\epsilon_g (L_p + L^2L_x) + \frac{\sigma_{\rm pe}^2}{\eta^2} + 4\gamma^2L^2\epsilon_x^2 + \nonumber\\
        & \big(4n\gamma^2 + \lVert\nabla_x\hat{\varphi}^{\rm ctr} \pm \nabla_x{\varphi}^{\rm ctr} \rVert^2\big)\mathbb{E}\Big[\lVert e^k\rVert^2\Big]\Big\} \nonumber\\
        \overset{(f)}{\leq} 6 \Big\{& 4n^{3/2}\epsilon_g (L_p + L^2L_x) + \frac{\sigma_{\rm pe}^2}{\eta^2} + 4\gamma^2L^2\epsilon_x^2 + \nonumber\\
        & \big(4n\gamma^2 + 2(M_x^2 + 4n^{3/2}\epsilon_gL_x)\big)\mathbb{E}\Big[\lVert e^k\rVert^2\Big]\Big\} \label{T3eqn2}
    \end{align}
    \normalsize To obtain $(a)$, we added and subtracted $(H^k)^\top\nabla_x\hat{\varphi}^{\rm ctr}$ and $\gamma(H^k)^\top \nabla_x \varphi^{\rm pol}(\hat{x}^{k+1})$ inside the norm in \eqref{T3eqn1}. We then used the fact that $\mathbb{E}\Big[\lVert\sum_{j=1}^m r_j\rVert^2\Big]\leq m\sum_{j=1}^m\mathbb{E}\Big[\lVert r_j\rVert^2\Big]$. We also make use of the fact that $\mathbb{E}[\lVert w_{\rm pe}^k\rVert^2]=\sigma_{\rm pe}^2$. In $(b)$, we made use of Lemma \ref{closeness} for the gradient estimate accuracy on $\varphi^{\rm ctr}$ and Assumption \ref{ass:ODasm}(iii) for Lipschitz condition on $h(p,d)$ to arrive at the term $4n^{3/2}\epsilon_g (L_p + L^2L_x)$. In $(c)$, we made use of Remark \ref{Lsmooth_full} for the smoothness condition on $\varphi^{\rm pol}$ and Assumption \ref{ass:ODasm}(iii) for Lipschitz condition on $h(p,d)$ to arrive at the term $4L^2\lVert x - \hat{x}\rVert^2$. In $(d)$, we made use of the fact that the norm of the steady-state opinion estimation error is upper-bounded by $\epsilon_x$. In $(e)$, we made use of Remark \ref{Lsmooth_full} for the Lipschitz condition on $\varphi^{\rm pol}$, thus arriving at the term $4n$. We also add and subtract the term $\nabla_x{\varphi}^{\rm ctr}$. Further, we use inequality $\lVert H^k-\hat{H}^k\rVert \leq \lVert H^k-\hat{H}^k\rVert_F = \lVert e^k \rVert$, where $\lVert\cdot\rVert_F$ refers to the Frobenius norm and $e^k$ is the sensitivity estimation error. In $(f)$, we made use of the fact that $\lVert a + b \rVert^2 \leq 2(\lVert a \rVert^2 + \lVert b \rVert^2)$. We then use Assumption \ref{measurementmodelassumpt} to state that $\lVert \nabla_x \varphi^{\rm ctr} \rVert \leq M_x$ and Lemma \ref{closeness} to state the upper-bound on $\lVert \nabla_x \varphi^{\rm ctr} - \nabla_x \hat{\varphi}^{\rm ctr} \rVert$. 

    We now use the inequality \eqref{T3eqn2} in \eqref{OFO_convergence_step5} and add the inequalities over the trigger time instances up to $k$, leading to telescopic cancellation. Thus, we have:

    \footnotesize
    \begin{align*}
        \sum_{\underset{l\leq k}{l\in\mathcal{T}}}\mathbb{E}[\lVert\mathcal{G}(p^l)\rVert^2&] \leq  \frac{2}{\eta(1-2\eta L')}\Big\{\mathbb{E}\Big[\varphi(p^0)\Big] - \mathbb{E}\Big[\varphi(p^{k+1})\Big]\Big\} + \\
        & \frac{6|\mathcal{T}|}{1-2\eta L'}\Big\{4n^{3/2}\epsilon_g(L_p+L^2L_x) + \frac{\sigma_{\rm pe}^2}{\eta^2}+ 4\gamma^2L^2\epsilon_x^2\Big\} + \nonumber\\
        & \frac{12\big[2n\gamma^2 + (M_x^2 + 4n^{3/2}\epsilon_g L_x)\big]}{1-2\eta L'} \sum_{\underset{l\leq k}{l\in\mathcal{T}}}\mathbb{E}\Big[\lVert e^l\rVert^2\Big]. \notag 
    \end{align*}
    \normalsize Since the positions do not change between two consecutive trigger time instances, it is sufficient to investigate convergence guarantees at the trigger time instances. 
    
    Using Theorem \ref{sensitivityconvergence} for the upper-bound on $\mathbb{E}[\lVert e^k \rVert^2]$, the summation of this term over the trigger instances leads to the following:
    \begin{align}\label{sensitivity_informal}
        \sum_{\underset{l\leq k}{l\in\mathcal{T}}}\mathbb{E}\Big[\lVert e^l\rVert^2\Big] \leq & |\mathcal{T}|C_f + \mathbb{E}\Big[\lVert e^0\rVert^2\Big] \sum_{\underset{l\leq k}{l\in\mathcal{T}}}(c_1\xi^{c_2T})^{2|\mathcal{T}|} \\ & |\mathcal{T}|C_f +\mathbb{E}\Big[\lVert e^0\rVert^2\Big]\Big(\frac{1-(c_1\xi^{c_2T})^{2|\mathcal{T}|}}{1-(c_1\xi^{c_2T})^2}\Big).\notag
    \end{align}
    \normalsize We start the algorithm with $p^0=0_n$, thus $\mathbb{E}\Big[\varphi(p^0)\Big] = \varphi(0)$. Further, $\exists \ \varphi^{*}\leq\mathbb{E}\Big[\varphi(p^k)\Big], \forall k\in\mathbb{N}_0$, i.e. $\varphi^{*}$ is a local optimal value. Thus, with the above formulations and \eqref{sensitivity_informal}, we obtain \eqref{OFOconvergenceequation}.

    \section*{References}
\bibliographystyle{myieeetr}
\bibliography{references}

\begin{tac}
\begin{IEEEbiography}[{\includegraphics[width=1in,height=1.25in,clip,keepaspectratio]{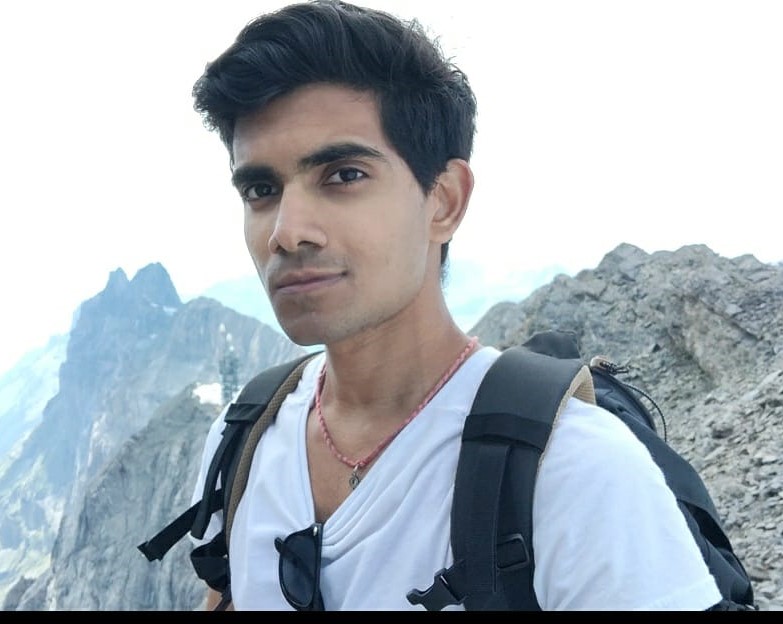}}]{Sanjay Chandrasekaran} recieved his B. Tech. degree in Instrumentation and Control and his MSc. degree in Information Technology and Electrical Engineering from the National Institute of Technology, Tiruchirappalli, India and ETH Zurich, Zurich, Switzerland in 2021 and 2023, respectively. His research interests lie in optimization-based control, learning-based control and network control systems.
\end{IEEEbiography}

\begin{IEEEbiography}[{\includegraphics[width=1in,height=1.25in,clip,keepaspectratio]{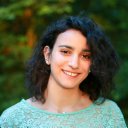}}]{Giulia De Pasquale} (Member, IEEE)
received the B.Sc. degree in Information Engineering, the M.Sc. degree in Control engineering and Ph.D. degree in Automatic Control in 2017, 2019, 2023, respectively, from the University of Padova, Padua, Italy. She is currently a post-doctoral researcher at the Automatic Control Laboratory, ETH Zürich, Zurich,
Switzerland. Her current research interests include modeling, analysis, and control of networked socio-technical systems.
\end{IEEEbiography}

\begin{IEEEbiography}[{\includegraphics[width=1in,height=1.25in,clip,keepaspectratio]{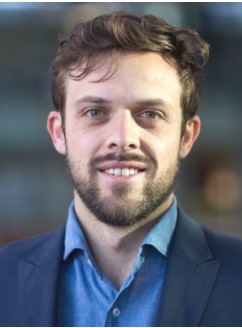}}]{Giuseppe Belgioioso} (Member, IEEE) is an Assistant Professor at the Division of Decision and Control Systems at KTH Royal Institute of Technology, Sweden. He received the Bachelor's degree in Information Engineering in 2012 and the Master's degree (cum laude) in Control Systems in 2015, both at the University of Padova, Italy. In 2020, he obtained the Ph.D. degree in Automatic Control at Eindhoven University of Technology (TU/e), The Netherlands. From 2021 to 2024, he was first a Postdoctoral researcher and then Senior Scientist at the Automatic Control Laboratory at ETH Z\"{u}rich, Switzerland.
His research lies at the intersection of optimization, game theory, and automatic control with applications in complex systems, such as electrical power grids and traffic networks.
\end{IEEEbiography}

\begin{IEEEbiography}[{\includegraphics[width=1in,height=1.25in,clip,keepaspectratio]{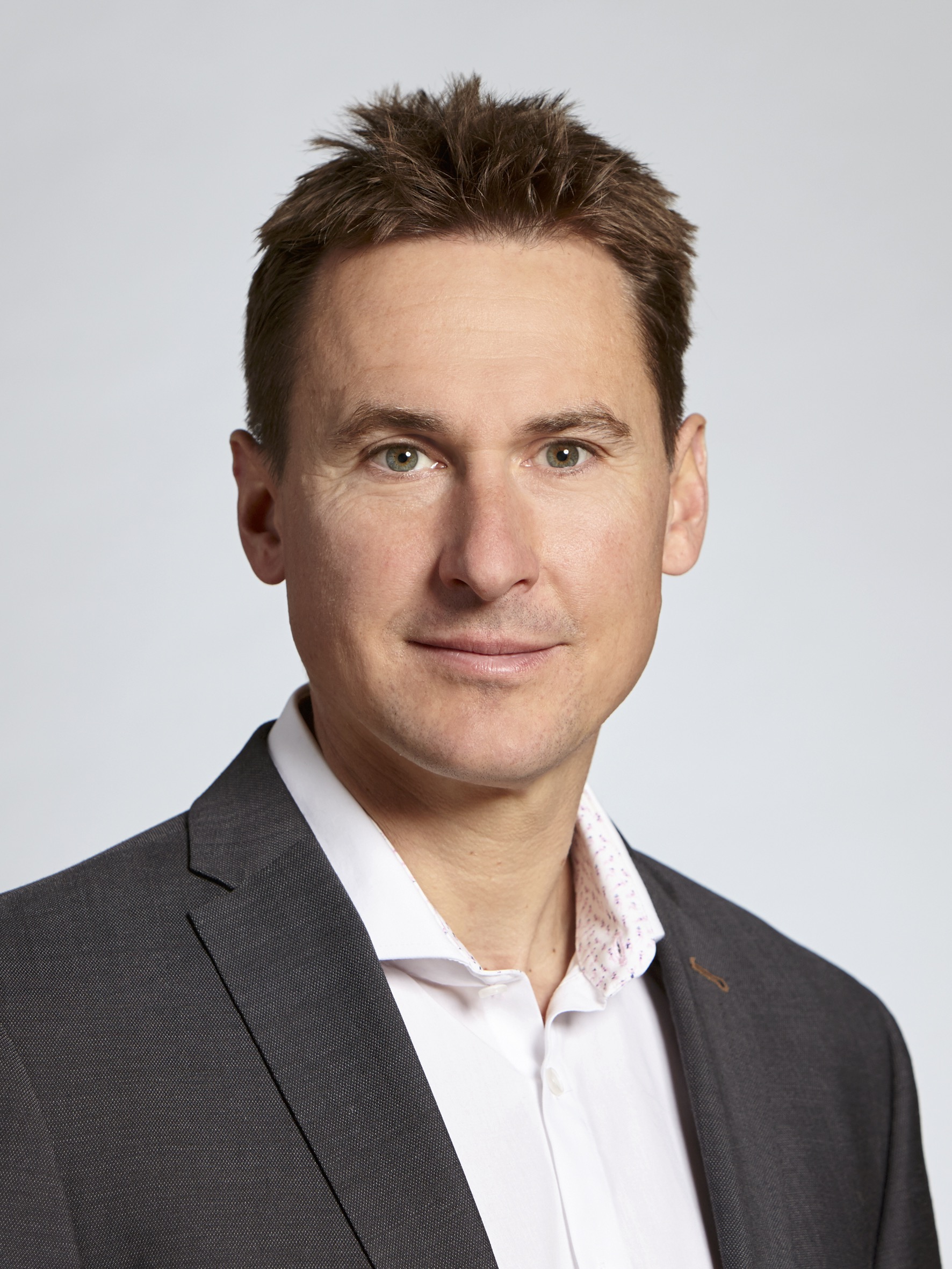}}]{Florian D\"orfler} (Senior Member, IEEE) received the Diploma in engineering cybernetics from the University of Stuttgart, Stuttgart, Germany, in 2008 and the Ph.D. degree in mechanical engineering from
the University of California at Santa Barbara, CA, USA, in 2013.
He is currently a Full Professor with the Automatic Control Laboratory, ETH Zürich, Zürich, Switzerland. His research
interests include control, optimization, and system theory with applications in
network systems, in particular electric power grids.
\end{IEEEbiography}
\end{tac}

\end{document}